\documentclass[10pt,journal,compsoc]{IEEEtran}

% *** CITATION PACKAGES ***
%
\ifCLASSOPTIONcompsoc
  \usepackage[nocompress]{cite}
\else
  \usepackage{cite}
\fi

% *** GRAPHICS RELATED PACKAGES ***
%
\ifCLASSINFOpdf
  \usepackage[pdftex]{graphicx}
  \graphicspath{{./figures/}}
\else

\fi

% *** MATH PACKAGES ***
%
\usepackage{amsmath}

% *** Own Packets ***
\usepackage{siunitx}
\usepackage{amsthm}
\usepackage{amssymb}
\usepackage{mathtools}

% for onedot definition, used in \ie macro
\usepackage{xspace}

% algorithms
\usepackage{algorithm}
\usepackage[noend]{algpseudocode}

% author comments
\usepackage{color}
\usepackage{soul}

% subfigure
\usepackage{caption}
\usepackage{subcaption}

\usepackage{paralist}

% *** Own Cmds ***

\newtheorem{definition}{Definition}[section]
\newtheorem{lemma}[definition]{Lemma}
\newtheorem{theorem}[definition]{Theorem}
\newtheorem{corollary}[definition]{Corollary}

% *** remark environment ***
\theoremstyle{remark}
\newtheorem{example}[definition]{Example}
\newtheorem*{example*}{Example}
\newtheorem*{remark}{Remark}

\usepackage[capitalise]{cleveref}

% *** symbols ***
\newcommand{\e}{\varepsilon}
\newcommand{\metas}{\textnormal{\texttt{M}}}
\newcommand{\mfu}{\textnormal{\texttt{M}}}

% *** sets ***
\newcommand{\IN}{\mathbb{N}}
\newcommand{\IR}{\mathbb{R}}

 %valid strings

% *** math operators ***
\DeclareMathAlphabet{\altmathcal}{OMS}{cmsy}{m}{n}

\DeclareMathOperator{\calG}{\altmathcal{G}}
\DeclareMathOperator{\calL}{\altmathcal{L}}
\DeclareMathOperator{\calO}{\altmathcal{O}}

\DeclareMathOperator{\md}{md}
\DeclareMathOperator{\clk}{clk}
\DeclareMathOperator{\ftone}{ft1}
\DeclareMathOperator{\fttwo}{ft2}

\newcommand{\OGCS}{\textsf{OffsetGCS}}
\newcommand{\CGCS}{\textsf{ClockedGCS}}
\newcommand{\FC}{\textsf{FC}}
\newcommand{\SC}{\textsf{SC}}
\newcommand{\FT}{\textsf{FT}}

\newcommand{\FCone}{\textsf{FC-1}}
\newcommand{\SCone}{\textsf{SC-1}}
\newcommand{\FTone}{\textsf{FT-1}}

\newcommand{\FCtwo}{\textsf{FC-2}}
\newcommand{\SCtwo}{\textsf{SC-2}}
\newcommand{\FTtwo}{\textsf{FT-2}}

\newcommand{\offset}{\widehat{O}}

\newcommand{\Tmax}{T_{\mathrm{max}}}
\newcommand{\Tmeas}{T_{\mathrm{meas}}}
\newcommand{\Tosc}{T_{\mathrm{osc}}}
\newcommand{\Tctr}{T_{\mathrm{ctr}}}
\newcommand{\Tclk}{T_{\mathrm{clk}}}
\newcommand{\modes}{\md}

\newcommand{\clkv}{\clk_v}
\newcommand{\clkw}{\clk_w}

% *** Basic Gates ***
\newcommand{\AND}{\textup{\texttt{and}}}
\newcommand{\OR}{\textup{\texttt{or}}}

\newcommand{\NOR}{\textup{\texttt{nor}}}

 % implementation
 % implementation
 % implementation %prev:2-sort
 % function
 % function
 % function %prev:2-sort

\newcommand{\abs}[1]{\left|#1\right|}

	% derivatives
\newcommand{\ceil}[1]{\lceil{#1}\rceil}

\newcommand{\BFM}{\textsf{GCSoC}}

% Define Block for Algorithmic
\algblockdefx[NAME]{On}{EndOn}%
    [1]{\textbf{at} #1 \textbf{do}}%
    {\textbf{end}}
\algtext*{EndOn}

% *** correct punctuation ***

\makeatletter
\DeclareRobustCommand\onedot{\futurelet\@let@token\@onedot}
\def\@onedot{\ifx\@let@token.\else.\null\fi\xspace}

\def\eg{e.g\onedot} 
\def\ie{i.e\onedot} 
\def\cf{cf\onedot} 
\def\wrt{w.r.t\onedot}
\def\resp{respectively\onedot}

\def\etal{et al\onedot}

\makeatother

\newcommand{\myparagraph}[1]{\medskip\noindent{\bf #1.}}

\begin{document}
\title{PALS: Distributed Gradient Clocking on Chip}
\author{Johannes Bund,
        Matthias Függer,
        Moti Medina%
\IEEEcompsocitemizethanks{%
 \IEEEcompsocthanksitem J.\ Bund was with the Engineering Faculty at Bar-Ilan University,
  Ramat Gan, Israel. Major parts where carried out when J.\ Bund was with CISPA Helmholtz
  Center for Information Security, Saarland, Germany.
  E-mail: bundjoh@biu.ac.il

  \IEEEcompsocthanksitem M.\ Függer was with CNRS, LMF, ENS Paris-Saclay, Université Paris-Saclay,
  91190 Gif-sur-Yvette, France.
  E-mail: mfuegger@lmf.cnrs.fr

  \IEEEcompsocthanksitem M.\ Medina was with the Engineering Faculty at Bar-Ilan University,
  Ramat Gan, Israel.
  E-mail: moti.medina@biu.ac.il

}\thanks{A conference version of this work appeared at IEEE ASYNC \cite{pals:async20}.}}
%
% The paper headers
\markboth{}%
{}
\IEEEtitleabstractindextext{%
\begin{abstract}%
  Consider an arbitrary network of communicating modules on a chip, each requiring a local signal
  telling it when to execute a computational step.
There are three common solutions to generating such
  a local clock signal:
(i) by deriving it from a single, central clock source,
(ii) by local, free-running oscillators, or
(iii) by handshaking between neighboring modules.
Conceptually, each of these solutions is the result of a perceived dichotomy in which (sub)systems
  are either clocked or asynchronous.
We present a solution and its implementation that lies between these extremes.
Based on a distributed gradient clock synchronization algorithm, we show a novel design providing modules
  with local clocks, the frequency bounds of which are almost as good as those of free-running
  oscillators, yet neighboring modules are guaranteed to have a phase offset substantially smaller
  than one clock cycle.
Concretely, parameters obtained from a \SI{15}{\nano\meter} ASIC simulation running at \SI{2}{\giga\hertz}
yield mathematical worst-case bounds of \SI{20}{\pico\second} 
on the phase offset for a $32 \times 32$ node grid network.

\end{abstract}
\begin{IEEEkeywords}
on-chip distributed clock generation, gradient clock synchronization, GALS
\end{IEEEkeywords}}
%
% make the title area
\maketitle
\IEEEdisplaynontitleabstractindextext
\IEEEpeerreviewmaketitle
\IEEEraisesectionheading{\section{Introduction}\label{sec:intro}}

\noindent
Consider a circuit of dimension $W \times W$ consisting of
  comparably small modules of normalized dimension~$1$ that
  predominately communicate with physically close modules.

\myparagraph{Local clock signals}
Each circuit module requires a local clock signal to trigger its
  computational steps.
There are two extreme approaches to provide these clock signals:
(i) In the synchronous approach, a clock signal is distributed via a clock tree.
The clock's period is chosen to ensure lock-step computational rounds between all modules and, thus, in particular, for communicating modules.
(ii) In the asynchronous approach, modules, potentially as small as a single gate, generate the clock signal locally via handshaking with all communication partners.

From the modules' perspective, both methodologies provide local clock ticks with different guarantees.
We focus on three measures:
(a) Whether they ensure a \emph{round structure}, i.e.,
  there is a time-independent tick offset~$R$,
  such that data that is provided at tick~$k$ is guaranteed to be available at the receiving module at tick~$k+R$.
(b) The \emph{local skew}, i.e., the maximum difference in
  time between any two $k^\text{th}$ ticks at modules that communicate with each other.
(c) The \emph{waiting time}, i.e., the maximal time between two successive ticks.
Note that (b+c) combined with a minimal time between successive ticks allows one to construct a round structure.

\myparagraph{Limits of the extremes}
Ideally, one would wish for a circuit with a round structure, a small local skew, and a small waiting time.
While fully synchronous and asynchronous circuits provide a round structure,
  they have significant local skew or waiting times.
The \emph{local skew} in synchronous circuits has been shown to \emph{grow linearly} with the circuit
  width~$W$; see Section~\ref{subsec:clocktree}.
On the other hand, causal acknowledge chains in asynchronous systems can span the entire system,
  resulting in \emph{waiting times} that \emph{grow linearly} with~$W$.

Fully synchronous or asynchronous systems are indeed
  rare in practice: in current synchronous systems, there are numerous clock domains with
asynchronous interfaces in between~\cite{foster2015trends}.
Full asynchronous, delay-insensitive
  circuits~\cite{martin1986compiling} suffer from substantial computational
  limitations~\cite{martin1990limitations,manohar2017eventual,manohar2019asynchronous}
  and provide no timing guarantees, rendering them unsuitable for many applications.
Accordingly, most real-world asynchronous systems
  will utilize timing assumptions on some components.

\myparagraph{A systematic tradeoff: GALS}
Globally Asynchronous Locally Synchronous (GALS)
  systems~\cite{chapiro1984globally,teehan2007survey}
  are a systematic approach between both extremes.
Unlike a single synchronous region (clocked circuits), and no clocked regions (delay-insensitive), GALS systems have several clock islands communicating asynchronously via handshakes.
If the width $W$ of the synchronous islands is small, the clock islands can maintain small skew locally.
However, the gain comes at the expense of \emph{no round structure}: Clocks on different islands may drift apart arbitrarily.
Furthermore, communication across clock domains requires passing through synchronizers~\cite{teehan2007survey,dobkin2004data}.
Besides the disadvantage of a \emph{non-zero probability to cause metastable upsets},
  the synchronizers incur $2$ or more clock cycles of
  additional \emph{communication latency} if they are in the data path.

Alternative solutions without synchronizers in the data path
  have been proposed in \cite{DDX95low,CG03efficient}.
The designs either skip clock cycles or switch to a clock
  signal shifted by half a period when the transmitter and receiver clock risk violating setup/hold conditions.
The signal that indicates this choice (skip/switch) is synchronized without additional latency to the data path.
Depending on the implementation and intended guarantees, the additional latency is in the order of a clock period.
While this can, in principle, be brought down to the order of setup/hold-windows, such designs would require considerable logical overhead and fine-tuning of delays.
An application-level transmission may be delayed by such a time slot.
In \cite{DDX95low}, this additional delay can be up to $2$ clock periods when a
  so-called no-data packet is over-sampled.
Further, there is a \emph{non-zero probability of metastable upsets}, and applying such a scheme has to insert no-data packets periodically.

Finally, consider a potential application that runs on top of such schemes and uses handshaking to make sure all its packets of a (logical) time
  step have arrived before the next time step is locally initiated. It faces
  the same problem as a fully asynchronous design: the
  \emph{waiting time grows linearly} with the circuit dimension.

\myparagraph{Solutions with round structure}
A fully synchronous system provides two convenient properties:
(i) metastability-free communication, and
(ii) a round structure, that is, no need for handshaking between the synchronous islands of a GALS system.
Abandoning the synchronous structure leads to the loss of these properties.

Solutions that directly provide a round structure have been proposed.
Examples are GALS architectures with pausible   clocks~\cite{yun1996pausible,fan2009analysis}, distributed clock generation
  algorithms like DARTS~\cite{FS12:DC} and FATAL~\cite{dolev2014fault},
  wave clock distribution~\cite{fischer20239}, and
  distributed clocking grids~\cite{fairbanks2005self}.
However, pausible clocks suffer from potentially unbounded waiting times
  due to metastability, and DARTS and FATAL require essentially fully connected
  communication networks. 
  The solution in~\cite{fischer20239} employs analog cross-coupling of clock buffers to distribute a single clock source over a grid network.
Here, we show how to design a purely digital system that utilizes an algorithmic approach to synchronize many
  clock domains. 
A digital version of the Fairbanks clock generation grid~\cite{fairbanks2005self} is analyzed in this work
  and shown to lead to linear waiting times; see Section~\ref{sec:fairbanks}. 

\myparagraph{The PALS approach}
In this work, we present a different approach that combines
  a \emph{round structure} with \emph{low local skew}, a
  \emph{low waiting time}, and provable \emph{absence of metastable-upsets}.
The PALS approach can be regarded as a drop-in replacement for GALS systems; it uses locally synchronous islands and global communication between islands. 
It improves over GALS systems by adding a round structure.
  
Our design is based on the \emph{distributed gradient clock
  synchronization (\textsf{GCS}) algorithm} by Lenzen et~al.~\cite{lenzen10tight}, in which the goal
  is to minimize the worst-case clock skew between adjacent nodes in a network.
In our setting, the modules correspond to nodes; an edge connects them if they directly communicate (i.e., exchange data).
More precisely, let $D$ be the diameter of the network and
  $\rho$ be the (unintended) drift of the clock of a clocked region,
  $\mu > 2 \rho$ a freely chosen constant, and $\delta$ an upper bound on how
  precisely the phase difference to neighboring clocked regions is known.
Then:
\begin{itemize}
  \item The synchronized clocks are guaranteed to run at normalized rates
  between $1$ and $(1+\mu)(1+\rho)$, i.e., have \emph{constant waiting time}.
  \item The \emph{local skew} is bounded by $\calO(\delta \log_{\mu/\rho}D)$.
  \item The \emph{global skew}, i.e., the maximum phase offset between any two
  nodes in the system, is $\calO(\delta D)$.
\end{itemize}
In other words, the synchronized clocks are almost as good as free-running
  clocks in a GALS system, with drift $\rho$,
  yet the \emph{local skew} grows only
  \emph{logarithmically in the chip width~$W$}.

We extend the conference version~\cite{pals:async20} by
  an in-depth analysis of the PALS algorithm and additional simulations demonstrating its performance.
We further add a comparison to a state-of-the-art clock generation grid~\cite{fairbanks2005self}.

\myparagraph{Outline and results}
After the introduction in this section, we discuss the computational model in Section~\ref{sec:model}.
In Section~\ref{sec:algs}, we briefly present the \textsf{GCS} algorithm before discussing a variant of the algorithm (called $\OGCS$) used in this work.
We break down the $\OGCS$ into hardware modules and specify these modules in
  Section~\ref{sec:hardware}.
The algorithm carried out by the hardware modules is denoted by $\CGCS$.
Our main theorem (Theorem~\ref{lem:implements}) states that every hardware system
  that implements $\CGCS$ maintains the skew bounds of the \textsf{GCS} algorithm.
An implementation of the hardware modules on register-transfer-level, which we denote by
  \BFM, is discussed in Section~\ref{sec:impl}.
We conclude this work with simulations of this implementation in Section~\ref{sec:simu}.
Implementation and simulation are carried out in the $\SI{15}{\nano\meter}$ FinFET-based NanGate
  OCL~\cite{martins2015open}.
For $\SI{2}{\giga\hertz}$ clock sources with an assumed drift of $\rho=10^{-5}$, and $\mu =
  10^{-3}$, our simple sample implementation guarantees that $\delta\leq
  \SI{5}{\pico\second}$ in the worst case.
The resulting local skew is $\SI{20}{\pico\second}$, 
  which is well below a clock cycle.
We stress that this enables much faster communication than
  handshake-based solutions, which incur synchronizer delay.
We conclude with a comparison of the performance of our solution by SPICE simulations to a
  digital version of a Fairbanks grid.

\section{Computational Model}\label{sec:model}

\myparagraph{Network, Communication, and Timing}
The network of communicating synchronous PALS islands is
  modeled by an undirected graph $G=(V,E)$, where the set of nodes $V$ is the set of islands and there is an edge $(v,w) \in E$, if $v$ and $w$ communicate.
Edges are bidirectional, i.e., edges $(v,w)$ and $(w,v)$ are the same edge.
Furthermore, for each $v\in V$, $E$ contains edge $(v,v)$.
The diameter $D$ of a network is the maximum distance over all pairs of nodes, where the distance between two nodes is the length of a shortest
  path connecting those nodes in the network.

We denote real time, \ie, an external reference time for analysis,
  by \emph{Newtonian} time.
A node has no access to Newtonian time,
  but has its own (internal) time reference.

Nodes communicate by sending content-less messages, known as \emph{pulses}.
A pulse is sent via broadcast to all neighboring nodes.
The message delay is the time a pulse travels between the sender and receiver.
It is constrained by a maximum delay $d$ and a minimum delay $d-U$, where
  $U$ is the delay uncertainty.
A pulse sent by a node at Newtonian time $t$ is received between time $t+d-U$
  and time $t+d$.

\myparagraph{Hardware Clock}
Each node can locally measure the progress of time.
For example, a node may do this via a local ring oscillator.
For the analysis, we abstract any such device by the mathematical concept of a
  \emph{hardware clock}.
A hardware clock is prone to uncertainty, which we model by a variable rate
  that may change over time.
The uncertainty is called the \emph{hardware clock drift} (short \emph{clock drift}).
Formally, for each node $v\in V$ there is an
  integrable function $h_v\colon\IR_{\geq 0}\rightarrow\IR$
  called the hardware clock rate.
Parameter $\rho > 0$ is an upper bound on the one-sided hardware clock drift of all nodes.
The hardware clock rate satisfies $1\leq h_v(t)\leq 1+\rho$ for all $t\in\IR_{\geq 0}$.
The hardware clock value of $v$ at time $t$, $H_v(t)$, is then defined by
\begin{equation*}
  H_v(t)=\int_{0}^t h_v(\tau)d\tau+H_v(0)\,,
\end{equation*}
where $H_v(0)$ is the initial value of $v$'s hardware clock at Newtonian time~$0$.
A node that has measured its hardware clock to advance by $T$ knows
  that the real time difference is in $[T/(1+\rho), T]$.

\myparagraph{Logical Clock}
While a hardware clock allows a node to measure time differences, its rate cannot
  be controlled.
The logical clock is a hardware clock that can also be controlled.
Indeed, we will use an adjustable ring oscillator in this work as a
  logical clock.
Formally, the local clock signal a node produces is given by
  $L_v\colon\IR_{\geq 0}\rightarrow\IR$,
  where $L_v(t)$ is the phase (normalized by $2\pi$) of the clock signal since the
  first clock tick.\footnote{For
  example, a (perfect, non-drifting) logical clock with
  frequency $f$ has phase $2\pi \cdot f \cdot t$
  at time~$t$, and normalized (by $2\pi$) phase $f\cdot t$.
The logical clock thus advances by a full step of one every $1/f$ time
  but continuously advances in between.}
A node's logical clock is initialized to $H_v(0)$ and
  follows the hardware clock's rate but is adjustable by a constant factor.
In our algorithm, the logical clock will be implemented by the node's local adjustable ring
  oscillator that has only two modes: slow and fast.
Their rate differs by factor $\mu$, slow has (normalized) rate $1$ and
  fast has $(1+\mu)$.

\myparagraph{Skew}
The skew between two nodes describes the difference in their logical clock values.
The upper bound on the skew is a figure of merit for clock synchronization algorithms.
We regard two types of skew in a system, the \emph{global skew} and the \emph{local skew}.
\begin{definition}[global and local skew]
  The \emph{global skew} $\calG(t)$ is the maximum skew between any two nodes
  in the network. Formally, it is defined by
  \begin{equation*}
    \calG(t)\coloneqq\max_{v,w\in V}\left\{L_w(t)-L_v(t)\right\}\,.
  \end{equation*}
  The \emph{local skew} $\calL(t)$ is the maximum skew between any two neighboring nodes
  in the network. Formally, it is defined by
  \begin{equation*}
    \calL(t)\coloneqq\max_{(v,w)\in E}\left\{L_w(t)-L_v(t)\right\}\,.
  \end{equation*}
\end{definition}

\myparagraph{Our Goal}
Given a network of nodes and the nodes' local oscillator parameters,
  the goal is to provide an algorithm (i.e., a circuit)
  that controls the slow and fast clock speed signals at each node such that small,
  bounded, local, and global skews are ensured.

\section{Algorithm and Skew Bound Guarantees}\label{sec:algs}

\subsection{Gradient Clock Synchronization}\label{sec:gcs}

We start by recalling the class of \textsf{GCS} algorithms studied by
  Lenzen \etal~\cite{lenzen10tight}.

Intuitively, a \textsf{GCS} algorithm executed by node $v$ continuously measures the
  skew to each neighbor $w$.
By a set of rules, the algorithm decides whether to progress the logical clock
  at a fast or a slow rate.
Lenzen \etal showed that such \textsf{GCS} algorithms achieve close synchronization between
  neighboring nodes in an
  arbitrary network, \ie, minimize $\calL(t)$.
Let $\delta$ be an upper bound on how precisely the skew between neighbors is known.
Provided that the global skew does not exceed a bound of $\calO(\delta D)$, \textsf{GCS} achieves asymptotically optimal local skew bounded by $\calO(\delta \log_{\mu/\rho}D)$.
In other words, local skew grows only logarithmically in the hop diameter of the network; still, we have clocks that progress at a minimum rate of $1$.
The local and global skew bounds are asymptotically optimal~\cite{lenzen10tight}.

\subsection{GCS and OffsetGCS Algorithm}\label{sec:prelim-gcs-alg}

The \textsf{GCS} algorithm by Lenzen \etal computes a logical clock from the hardware clock in two
  different modes, fast and slow.
In slow mode, the logical clock follows the rate of the hardware clock.
In fast mode, the logical clock advances at the hardware clock rate and \emph{speedup factor} $\mu>0$.
Formally, a node in fast mode advances its logical clock with rate $(1+\mu)h_v(t)$, where
  $\mu$ is chosen by the designer.
A node controls its binary mode signal $\gamma_v(t)\in\{0,1\}$ to adjust its logical clock.
In \emph{fast mode} $\gamma_v$ is set to $1$ and,
  accordingly, in \emph{slow mode} $\gamma_v$ is set to $0$.
The logical clock value of $v$ at time $t$ with initial value $H_v(0)$ thus is
\begin{equation*}
  L_v(t)=\int_0^t (1+\mu\cdot\gamma_v(\tau))\,h_v(\tau)\,d\tau+H_v(0)\,.
\end{equation*}
A node in fast mode must be able to catch up to a node in slow mode.
Hence, we pose the constraint that fast mode (without clock drift) can never be slower
  than slow mode (with clock drift).
This can be formalized as
\begin{equation*}
  1+\rho<1+\mu\,.\label{eqn:murho}
\end{equation*}

\noindent
The algorithm specifies two conditions that control when to switch between fast and slow modes.
Accordingly, conditions are named \emph{fast condition} ($\FC$) and \emph{slow condition} ($\SC$).
The algorithm is parameterized by $\kappa$, which determines the synchronization quality.

\begin{definition}[fast and slow condition]\label{def:gcsconditions}
  Let $\kappa\in\IR^+$ be a positive, non-zero, real number.
  A node $v\in V$ satisfies the \emph{fast condition at time $t$} if there is a natural number $s\in \IN = \{0,1,\dots\}$ such that both:
  \begin{subequations}
    \begin{align}
      \exists (v,x)\in E : L_x(t)-L_v(t)&\geq (2s+1)\kappa     \label{eqn:fc1}\tag{\FCone}\\
      \forall (v,y)\in E : L_y(t)-L_v(t)&\geq -(2s+1)\kappa      \label{eqn:fc2}\tag{\FCtwo}
    \end{align}
  \end{subequations}
  Node $v\in V$ satisfies the \emph{slow condition at time $t$} if there is a natural number $s\in \IN$ such that the following conditions hold:
  \begin{subequations}
    \begin{align}
      \exists (v,x)\in E : L_x(t)-L_v(t)&\leq -2s\kappa     \label{eqn:sc1}\tag{\SCone}\\
      \forall (v,y)\in E : L_y(t)-L_v(t)&\leq 2s\kappa      \label{eqn:sc2}\tag{\SCtwo}
    \end{align}
  \end{subequations}
\end{definition}

\noindent
Node $v$ satisfies the fast condition if there is at least one node $u$, that is ahead of
$v$ and no other node behind
$v$ exceeds the absolute skew between $v$ and $u$.
The slow condition is satisfied if there is a node $u$ behind $v$ that has a larger absolute
skew to $v$ than all nodes ahead of $v$. The thresholds use
odd multiples of $\kappa$ for the fast condition and even multiples of $\kappa$
for the slow condition to ensure mutual exclusion.

If $v$ is the node with the largest logical clock value in the network, then all other
nodes are behind $v$, they have negative skew.
Thus, the slow condition is satisfied for $s=0$.
Accordingly, if $v$ is the node with the smallest clock value in the network, then it satisfies
the fast condition as all skews to other nodes are positive.

\begin{definition}
  \label{dfn:gcs-implementation}\label{def:gcsimpl}
  An algorithm is a \emph{\textsf{GCS} algorithm with parameters $\rho, \mu, \kappa$}
  if the following invariants hold, for every node $v \in V$ and all times $t, t'$:
  \begin{align*}
    & \mu > \rho\,\tag{I1}\label{eqn:i1}\\
    & L_v(t') - L_v(t)\! \in\! [1,1 + \mu] \cdot (H_v(t') - H_v(t))\,\tag{I2}\label{eqn:i2}\\
    & \text{if $v$ satisfies $\FC$ at time $t$ then $v$ is in fast mode at time $t$}\tag{I3}\label{eqn:i3}\\
    & \text{if $v$ satisfies $\SC$ at time $t$ then $v$ is in slow mode at time $t$}\tag{I4}\label{eqn:i4}\\
  \end{align*}
\end{definition}

\noindent
Invariant \eqref{eqn:i2} states that the rate of the logical clock is at least
  the rate of the hardware clock and at most $(1 + \mu)$ times the rate of the hardware clock.

%%% shorten
\begin{remark}
  Every algorithm that meets \cref{def:gcsimpl} is a \textsf{GCS} algorithm.
  In this work we focus on the algorithm by Lenzen, Locher, and Wattenhofer \cite{lenzen10tight},
    which we state in Algorithm~\ref{alg:gcs}.
\end{remark}
%%% - shorten

\myparagraph{Maximal and Minimal Offsets}
The conditions in \cref{def:gcsconditions} can be reformulated using the maximal
  and minimal offset.
Maximal and minimal offsets at node $v$ are given by
\begin{align}
  O_{\max}(t)&\coloneqq\max_{(v,x)\in E}\{L_x(t)-L_v(t)\}\,,\\
  O_{\min}(t)&\coloneqq\min_{(v,x)\in E}\{L_x(t)-L_v(t)\}\,.
\end{align}
The node with the largest offset to $v$ is the node that is  ahead of $v$ the most, and the node with the smallest offset to $v$ is the node most behind $v$.

A node $v\in V$ satisfies the fast condition if some neighbor reached a (positive)
  threshold and no other neighbor crossed the corresponding negative offset.
If the $O_{\max}(t)$ reached a certain threshold we are certain that some
  neighbor reached this offset.
Accordingly if $O_{\min}(t)$ is larger than the corresponding negative offset,
  no neighbor crossed the corresponding negative offset.
Formally, we replace \cref{eqn:fc1,eqn:fc2} and \cref{eqn:sc1,eqn:sc2} in 
  \cref{def:gcsconditions} by
\begin{align}
  &O_{\max}(t)\geq(2s+1)\kappa\,,  \label{eqn:restatefc1}\tag{\FCone} \\
  &O_{\min}(t)\,\geq-(2s+1)\kappa\,,  \label{eqn:restatefc2}\tag{\FCtwo} \\
  &O_{\min}(t)\,\leq-2s\kappa\,,\text{and} \label{eqn:restatesc1}\tag{\SCone} \\
  &O_{\max}(t)\leq2s\kappa\,.  \label{eqn:restatesc2}\tag{\SCtwo}
\end{align}

\myparagraph{Offset Estimates}
Nodes have no access to logical clocks of their neighbors.
Hence, precise skews remain unknown to the node. In order to fulfill the invariants of the
algorithm a node maintains an estimate of each offset to a neighbor. Offset and skew
are the same, we use the terms interchangeably. The
\emph{offset estimate} of node $v$ to its neighbor $w$ is denoted by
$\widehat{O}_w$.
Intuitively, we have $\widehat{O}_w(t)\approx L_w(t)-L_v(t)$.
Parameter $\delta$ gives a two-sided bound on the estimates.
\begin{equation}
  \left|\widehat{O}_w(t) - (L_w(t)-L_v(t))\right| \leq \delta \label{eqn:delta-pm}
\end{equation}

\noindent
Given an estimate of each neighboring clock, the \textsf{GCS} algorithm
  specifies the \emph{fast trigger} ($\FT$).
Each node determines by $\FT$ whether to go fast or slow.
A node that satisfies $\FC$ must satisfy $\FT$, but a node that satisfies
  $\SC$ must not satisfy $\FT$.

\begin{definition}[fast trigger]\label{def:fasttrigger}
  Let $\kappa\in\IR^+$ be a positive, non-zero, real number.
  A node $v\in V$ satisfies the \emph{fast trigger at time~$t$} if there is a natural number $s\in \IN$
  such that both:
  \begin{subequations}
    \begin{align}
      \exists (v,x)\in E : \widehat{O}_x(t)&\geq(2s+1)\kappa-\delta  \label{eqn:ft1}\tag{\FTone} \\
      \forall (v,y)\in E : \widehat{O}_y(t)&\geq-(2s+1)\kappa-\delta  \label{eqn:ft2}\tag{\FTtwo}
    \end{align}
  \end{subequations}
\end{definition}

\noindent
We are now able to state the \textsf{GCS} algorithm in \cref{alg:gcs}.%%% shorten
Intuitively, the \textsf{GCS} algorithm checks the $\FT$ at all times. If $v$ satisfies $\FT$
then $v$ switches to fast mode, otherwise $v$ defaults to slow mode.

%%% shorten
\begin{algorithm}[t]
  \begin{algorithmic}[1]
    \On{each time $t$}
      \For{each neighbor $w$}
        \State $o_w\gets\widehat{O}_w(t)$ \Comment{save offset estimate to $w$}
      \EndFor
      \State $\ftone_s \gets \exists w : o_w\geq(2s+1)\kappa-\delta$
      \State $\fttwo_s \gets \forall w : o_w\geq-(2s+1)\kappa-\delta$
      \If{ $\exists s : \ftone_s \land \fttwo_s$ }
        \State $\gamma_v(t)\gets 1$ \Comment{switch to fast mode}
      \Else
        \State $\gamma_v(t)\gets 0$ \Comment{switch to slow mode}
      \EndIf
    \EndOn
  \end{algorithmic}
  \caption{\textsf{GCS} algorithm at node $v$, where $\ftone_s$, $\fttwo_s$, and $o_w$ are variables.}
  \label{alg:gcs}
\end{algorithm}
%%% - shorten

\begin{remark}
As the decision to run fast or slow is a discrete decision,
  a circuit implementation will be prone to metastability~\cite{marino:general}.
We focus on this problem in \cref{par:clockedalg}.
\end{remark}

\myparagraph{Maximal and Minimal Offset Estimates}
The fast trigger of Lenzen \etal can be redefined using the largest
and smallest offset estimate of node $v$.
We define the maximal and the minimal estimate of $v$'s offset estimates by
\begin{align*}
  \widehat{O}_{\max} \coloneqq \max_{(v,x)\in E}\{\widehat{O}_x\}\,,\text{ and }
  \widehat{O}_{\min} \coloneqq \min_{(v,x)\in E}\{\widehat{O}_x\}\,.
\end{align*}
Then, we replace \cref{eqn:ft1,eqn:ft2} in \cref{def:fasttrigger} by
\begin{align}
  \widehat{O}_{\max}(t)&\geq(2s+1)\kappa-\delta\,,  \label{eqn:restateft1}\tag{\FTone} \\
  \widehat{O}_{\min}(t)&\geq-(2s+1)\kappa-\delta\,.  \label{eqn:restateft2}\tag{\FTtwo}
\end{align}
We are thus able to restate the \textsf{GCS} algorithm (Algorithm~\ref{alg:gcs})%%% shorten
  in Algorithm~\ref{alg:ogcs}.
Rather than quantifiers ``exists'' and ``for all''
  over all outgoing edges we make use of $\widehat{O}_{\max}$
  and $\widehat{O}_{\min}$.
The algorithm, referred to as $\OGCS$, is stated in \cref{alg:ogcs}.

%%% shorten
Next, we show that in $\OGCS$ we can also bound the number of thresholds,
i.e., we bound the maximum value of $s$ in \cref{alg:gcs}, line $6$.
  
\myparagraph{Bound on $s$}
In case of a bounded local skew, which we will later show to be the case, we can
  bound the maximal number of steps $s$ that need to be measured.
Let $\calL \coloneqq \sup_t\calL(t)$ be the largest skew between two neighbors.
Let $\ell$ be the largest number such that $(2\ell+1)\kappa-\delta \leq \calL$.
Then, node $v\in V$ satisfies the fast trigger at time $t$ if there is an $s\in [\ell+1]$
  such that \cref{eqn:restateft1,eqn:restateft2} hold.
%%% - shorten

\begin{algorithm}[t]
  \begin{algorithmic}[1]
    \On{each time $t$}
      \For{each adjacent node $w$}
        \State $o_w\gets\widehat{O}_w(t)$ \Comment{save offset estimate to $w$}
      \EndFor
      \State $o_{\max} \gets \max_{(v,w)\in E}\{o_w\}$ \Comment{compute and save max}
      \State $o_{\min} \gets \min_{(v,w)\in E}\{o_w\}$ \Comment{and min estimates}
      \For{$s\in[\ell+1]$}
        \State $\ftone_s \gets o_{\max} \geq (2s+1)\kappa-\delta$
        \State $\fttwo_s \gets o_{\min} \geq-(2s+1)\kappa-\delta$
      \EndFor
      \If{ $\exists s \in[\ell+1]: \ftone_s \land \fttwo_s$ }
        \State $\gamma_v(t)\gets 1$ \Comment{switch to fast mode}
      \Else
        \State $\gamma_v(t)\gets 0$ \Comment{switch to slow mode}
      \EndIf
    \EndOn
  \end{algorithmic}
  \caption{$\OGCS$ algorithm at node $v$, where $o_w$, $o_{\max}$, $o_{\min}$,
  $\ftone_s$, and $\fttwo_s$ are variables that store a value.}
  \label{alg:ogcs}
\end{algorithm}

%%% shorten
\myparagraph{Visualization}
In \cref{fig:gcs_large} we depict the conditions of $\OGCS$.
Along the axis we mark the offsets, where the $x$-axis marks the maximal and the
  $y$-axis marks the minimal offset.
Conditions $\FC$, $\SC$ and $\FT$ can be marked as areas.
The fast condition, defined by \cref{eqn:restatefc1,eqn:restatefc2}, is
  marked in yellow.
The slow condition, defined by \cref{eqn:restatesc1,eqn:restatesc2}, is
  marked in blue.
The fast trigger, defined by \cref{eqn:restateft1,eqn:restateft2}, is
  a translation of $\FC$ by $\delta$ to the left and $\delta$ down.
It is marked by the orange are (including the yellow area).
We require that $\kappa > 2\delta$ (see \cref{lem:uncertainty}),
  thus, the gap between $\FC$ and $\SC$ is larger than $2\delta$.

Fix a node $v$, we mark maximal and minimal offsets as a point $(O_{\min}, O_{\max})$.
In the \textsf{GCS} algorithm $v$ goes to fast mode at any time where $(O_{\min}, O_{\max})$
  falls into the $\FC$ (yellow) region.
Similarly, if $(O_{\min}, O_{\max})$ falls into the $\SC$ (blue) region, $v$
  goes to slow mode.
In between both regions $v$ is free to choose any speed between fast and slow mode.
The offset estimates of $v$ are given by point $(\widehat{O}_{\min}, \widehat{O}_{\max})$,
  which we mark as a cross.
Due to \cref{eqn:delta-pm} the cross may fall into the $\delta$ surrounding of
  $(O_{\min}, O_{\max})$.
We mark this by a larger box surrounding $(O_{\min}, O_{\max})$.

A node that executes $\OGCS$ chooses to go to fast or slow mode depending on whether
  $(\widehat{O}_{\min}, \widehat{O}_{\max})$ falls into the $\FT$ (orange and yellow)
  region.
From the visualization one can see that for any $(O_{\min}, O_{\max})$ in the $\FC$ region,
  $(\widehat{O}_{\min}, \widehat{O}_{\max})$ will be
  within the $\FT$ region, and can never cause $\OGCS$ to go slow.
Similarly any $(O_{\min}, O_{\max})$ in the $\SC$ region can never cause $\OGCS$
  to go fast.

\begin{figure}
  \centering
  \includegraphics[width=0.8\linewidth]{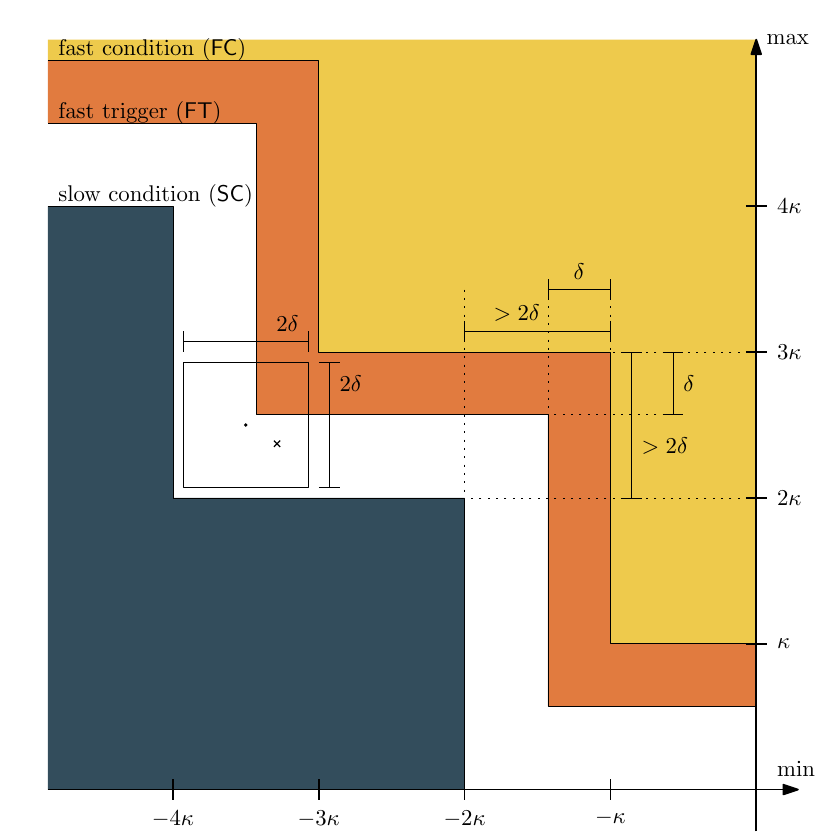}
  \caption{Visualization of $\FC$, $\SC$ and $\FT$ in the plane of $\min$ and
    $\max$ of the offsets.
    The exact measurement $(O_{\min}, O_{\max})$ is denoted by a point.
    The actual measurement $(\widehat{O}_{\min}, \widehat{O}_{\max})$ is depicted
    as a small cross.
    We denote the maximum measurement error by a square surrounding the exact measurement.
    The $\OGCS$ algorithm switches to fast when $(\widehat{O}_{\min}, \widehat{O}_{\max})$
    is within the $\FT$ region and to slow otherwise.}
  \label{fig:gcs_large}
\end{figure}
%%% - shorten

\subsection{Analysis of the $\OGCS$ Algorithm}\label{sec:gcs-analysis}

In what follows, we prove that for a suitable choice of parameters 
  (i.e., $\rho$, $\mu$, $\kappa$, $\delta$, and $\ell$), $\OGCS$ is a
  \textsf{GCS} algorithm as defined in \cref{dfn:gcs-implementation}.
Theorem~\ref{thm:gcs} implies that $\OGCS$ maintains tight skew bounds.
For the proof of Theorem~\ref{thm:gcs}, we refer the reader to \cite{pals:arxiv}.

\begin{theorem}
\label{thm:gcs}
Suppose algorithm $A$ is a \textsf{GCS} algorithm according to \cref{def:gcsimpl}
  with $\mu > 2 \rho$.
Then, $A$ maintains global and local skew of
\begin{align*}
  &\calG(t) \leq \frac{\mu \kappa D}{\mu - 2 \rho} &
  \calL(t) \leq \left(\left\lceil\log_{\mu / \rho} \frac{\mu D}{\mu - 2 \rho}\right\rceil + 1 \right)\kappa
\end{align*}
either 1) for all $t>0$ if $\calL(0)\leq\kappa$ or
2) for sufficiently large $t\geq T$, where $T\in\calO((\calG(0)+\kappa D)/(\mu-2\rho))$.
\end{theorem}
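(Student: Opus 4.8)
The plan is to bound the global and local skew by two arguments of very different character---the global bound by a comparatively direct rate estimate, the local bound by a multi-level potential argument---and then to recover both temporal regimes from a single differential inequality on those potentials. Throughout I would use only the invariants \eqref{eqn:i2}, \eqref{eqn:i3}, \eqref{eqn:i4} of a \textsf{GCS} algorithm together with the threshold structure of $\FC$ and $\SC$.

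\emph{Global skew.} Write $L_{\min}(t)=\min_u L_u(t)$ and $L_{\max}(t)=\max_u L_u(t)$, so that $\calG(t)=L_{\max}(t)-L_{\min}(t)$. A node attaining the maximum has $O_{\max}(t)\le 0$, hence satisfies $\SC$ with $s=0$; by \eqref{eqn:i4} it is slow, so by \eqref{eqn:i2} the upper Dini derivative obeys $\frac{d^+}{dt}L_{\max}(t)\le 1+\rho$. Dually, a node attaining the minimum has $O_{\min}(t)\ge 0\ge-\kappa$, so whenever some neighbor lies at least $\kappa$ ahead it satisfies $\FC$ with $s=0$ and is forced fast by \eqref{eqn:i3}, advancing at rate $\ge 1+\mu$. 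First I would argue that once $\calG$ exceeds the claimed value no shortest path from the trailing to the leading node can be free of a $\kappa$-step, so the lagging region is driven fast; charging the catch-up advantage $\mu$ against the two-sided drift separation over a $D$-hop path yields a differential inequality for $\calG$ whose stable value is exactly $\frac{\mu\kappa D}{\mu-2\rho}$, with the diameter entering because a trailing node is linked to a leading one through at most $D$ hops.

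\emph{Local skew (main obstacle).} The heart of the proof is the family of weighted potentials
\[
  \Psi_s(t)\coloneqq\max_{u,w\in V}\bigl(L_u(t)-L_w(t)-s\kappa\, d(u,w)\bigr)^{+},\qquad s\in\IN,
\]
where $d(\cdot,\cdot)$ is the hop distance. Here $\Psi_0(t)\le\calG(t)\le\frac{\mu\kappa D}{\mu-2\rho}$ is the base case supplied by the previous step, while $\Psi_s(t)\le 0$ forces every edge to carry skew at most $s\kappa$; hence it suffices to show $\Psi_{s^\star+1}\le 0$ for $s^\star=\lceil\log_{\mu/\rho}(\mu D/(\mu-2\rho))\rceil$. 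The key is a level-amplification estimate: if $\Psi_s$ remains positive, the pair realizing it has average per-hop skew exceeding $s\kappa$, so along the realizing path an entire block of trailing nodes is forced fast and of leading nodes forced slow---this is exactly where the odd/even threshold separation of $\FC$ and $\SC$, and the mutual-exclusion gap it provides, are used. Because fast outpaces slow by the factor $\mu$ while drift separates clocks only at rate $2\rho$, the excess contracts and is transferred to level $s+1$ at a spatial scale shrunk by $\mu/\rho$. Since any realizing distance lies in $[1,D]$, the recursion can proceed through at most $\log_{\mu/\rho}(\mu D/(\mu-2\rho))$ levels before the weighted skew can no longer be realized within the network, giving $\Psi_{s^\star+1}\le 0$ and thus $\calL(t)\le(s^\star+1)\kappa$. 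Making the amplification inequality rigorous---bounding $\frac{d^+}{dt}\Psi_s$ and verifying that mode forcing applies \emph{simultaneously} to every pair and every intermediate node that could attain the maximum---is the step I expect to be hardest and most delicate.

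\emph{Temporal regimes.} Both cases follow from the sign of these Dini derivatives at the target values. For case~1, I would check that each bound is an invariant: when $\calL(0)\le\kappa$ every $\Psi_s$ starts below its target, and the derivative analysis yields $\frac{d^+}{dt}\Psi_s\le 0$ whenever $\Psi_s$ sits at its target, so no bound is ever crossed and the estimates hold for all $t>0$. For case~2, the same inequality shows that any potential exceeding its target strictly decreases at rate $\Omega(\mu-2\rho)$; since the initial excess is $\calO(\calG(0)+\kappa D)$, every potential reaches its target within $T=\calO((\calG(0)+\kappa D)/(\mu-2\rho))$, after which the invariant argument of case~1 keeps it there.
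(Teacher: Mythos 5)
The paper gives no proof of this theorem at all --- it defers to the arXiv version, i.e., to the potential-function analysis of Lenzen, Locher, and Wattenhofer --- so there is nothing in-paper to compare line by line. Your outline is essentially that intended proof: the slow-mode/fast-mode rate argument for $\calG$, the family of per-hop-weighted potentials $\Psi_s$ with base case supplied by the global bound, contraction of the realizing distance by a factor $\mu/\rho$ per level until it drops below one hop (giving the $\log_{\mu/\rho}$ count and the $+1$), and the Dini-derivative dichotomy yielding the two temporal regimes. Two caveats keep this from being a proof rather than a correct plan: in the global-skew step, exhibiting a trailing node that satisfies \emph{both} \FCone{} and \FCtwo{} requires taking the minimizer of $L_v(t)+\kappa\,d(v,w)$ (or similar) rather than just ``some node on a path with a $\kappa$-step,'' since \FCtwo{} must be checked against all neighbors; and the level-amplification lemma --- which you correctly identify as the crux, and where the odd/even threshold separation of \cref{def:gcsconditions} actually gets used --- is asserted rather than established, and it is the only genuinely delicate part of the argument.
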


\myparagraph{Uncertainty sources in $\OGCS$ implementations}
In our analysis, we distinguish between two sources of uncertainty:
(i) The \emph{propagation delay uncertainty} $\delta_0$.
This is the absolute timing variation added to the measurement error due to propagation delays, e.g.,
  wire and gate delays on the path from the clock source to the measurement module.
(ii) The measurement error resulting from unknown clock rates.
Denote by $T_{\max}$ the \emph{time between initiating a measurement and using it}
  to control the logical clock speed.
During this time, logical clocks advance at rates that are not precisely known.
This adds to the measurement error because the actual difference might
  increase or decrease compared to the measured difference.

We denote an upper bound on the combined error by $\delta$; the relation of
  $\delta$ to $\delta_0$ and $T_{\max}$ is elaborated in \cref{sec:cgcs}.
Given $\delta$, we seek to choose $\kappa$ as small as possible to obtain a
  small local skew bound
  according to Theorem~\ref{thm:gcs}.
In the following, we show constraints on parameters like $\kappa$
  such that an instance of $\OGCS$, we will refer to it
  as a particular implementation of $\OGCS$, is a \textsf{GCS} algorithm.

\begin{theorem}
  \label{lem:uncertainty}
  An implementation of $\OGCS$ is a \textsf{GCS} algorithm if for all times~$t$ it satisfies
	\begin{inparaenum}[(i)]
    \item $\mu > 2\rho$
    \item $\left|\widehat{O}_w(t) - (L_w(t)-L_v(t))\right| \leq \delta$ 
    \item $\kappa > 2 \delta$
    \item $\sup\{s\in\IN | (2s+1)\kappa \leq \calL + 2\delta \} \leq \ell < \infty$
  \end{inparaenum}
\end{theorem}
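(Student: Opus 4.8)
The plan is to show that conditions (i)--(iv) force the implementation to satisfy each of the four invariants (I1)--(I4) of \cref{def:gcsimpl}. Two of these are essentially immediate. Invariant (I1), $\mu > \rho$, follows from (i) together with $\rho > 0$. Invariant (I2) is a direct consequence of how $\OGCS$ drives the logical clock: since the mode signal satisfies $\gamma_v(\tau) \in \{0,1\}$, the integrand $(1+\mu\gamma_v(\tau))h_v(\tau)$ lies pointwise between $h_v(\tau)$ and $(1+\mu)h_v(\tau)$, and integrating over $[t,t']$ (where $h_v \ge 1 > 0$) yields $L_v(t')-L_v(t) \in [1,1+\mu]\cdot(H_v(t')-H_v(t))$. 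So the real work is in (I3) and (I4), the two conditionals linking $\FC$/$\SC$ to the fast/slow decision actually taken by \cref{alg:ogcs}.

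For both of these I would first record the basic effect of the measurement bound (ii). Writing $O_w := L_w - L_v$ for the true offset, (ii) gives $O_w - \delta \le \widehat{O}_w \le O_w + \delta$ for every neighbor, and taking maxima and minima over neighbors yields $O_{\max} - \delta \le \widehat{O}_{\max} \le O_{\max} + \delta$ and $O_{\min} - \delta \le \widehat{O}_{\min} \le O_{\min} + \delta$. For (I3), suppose $v$ satisfies $\FC$ via some $s$, so $O_{\max} \ge (2s+1)\kappa$ and $O_{\min} \ge -(2s+1)\kappa$. Then $\widehat{O}_{\max} \ge O_{\max} - \delta \ge (2s+1)\kappa - \delta$ and $\widehat{O}_{\min} \ge O_{\min} - \delta \ge -(2s+1)\kappa - \delta$, i.e.\ the \emph{same} $s$ witnesses $\FT$. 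It remains to check that this $s$ lies in the finite range $[\ell+1] = \{0,\dots,\ell\}$ actually scanned by the algorithm: since $O_{\max} \le \calL \le \calL + 2\delta$, the inequality \FCone forces $(2s+1)\kappa \le \calL + 2\delta$, and (iv) then gives $s \le \ell$. Hence $\OGCS$ sets $\gamma_v = 1$, establishing (I3).

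For (I4), suppose $v$ satisfies $\SC$ via some $s$, so $O_{\max} \le 2s\kappa$ and $O_{\min} \le -2s\kappa$, whence $\widehat{O}_{\max} \le 2s\kappa + \delta$ and $\widehat{O}_{\min} \le -2s\kappa + \delta$. I would show $\FT$ fails for \emph{every} $s' \in [\ell+1]$ by a case split. If $s' \ge s$, then $\widehat{O}_{\max} \le 2s\kappa + \delta \le 2s'\kappa + \delta < (2s'+1)\kappa - \delta$, where the strict inequality is exactly $2\delta < \kappa$ from (iii); so \FTone fails. If $s' \le s-1$, then $-(2s'+1)\kappa - \delta \ge -2s\kappa + \kappa - \delta > -2s\kappa + \delta \ge \widehat{O}_{\min}$, again using $\kappa > 2\delta$; so \FTtwo fails. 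In either case the conjunction of \FTone and \FTtwo is false, so the algorithm finds no witnessing $s'$ and sets $\gamma_v = 0$, establishing (I4).

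The main obstacle is (I4): unlike (I3), which only needs one convenient witness, (I4) is a universal statement that must rule out \emph{all} candidate thresholds $s'$, and it is precisely here that the separation $\kappa > 2\delta$ of (iii) is indispensable --- it guarantees the $\delta$-inflated fast-trigger region cannot reach into the slow-condition region across the gap created by the odd-versus-even multiples of $\kappa$. The only other point requiring care is the bookkeeping in (I3) that the $\FC$ witness is not cut off by the finite search bound $\ell$, which is where (iv), via $O_{\max} \le \calL$, is used.
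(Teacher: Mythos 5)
Your proof is correct and follows essentially the same route as the paper: verify invariants (I1)--(I4), use the measurement bound to transfer $\FC$ to $\FT$ with the same witness $s$, bound $s\le\ell$ via the local skew, and use $\kappa>2\delta$ to show $\SC$ excludes $\FT$. The only cosmetic differences are that you phrase the (I4) argument as a direct case split over $s'$ where the paper argues by contradiction, and you bound the witness in (I3) via $O_{\max}\le\calL$ rather than via $\offset_x\le\calL+\delta$; both variants rest on the identical inequalities.
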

%%% shorten
\begin{proof}
  To show the statement, we verify the conditions of
    \cref{dfn:gcs-implementation}.
  By assumption condition \eqref{eqn:i1} is satisfied.
  Condition \eqref{eqn:i2} is a direct consequence of the algorithm specification.
  For Condition \eqref{eqn:i3}, suppose first that~$v$ satisfies the fast
  condition at time $t$.
  There exists some $s \in \IN$ and neighbor~$x$ of~$v$ such that
  $L_x(t) - L_v(t) \geq (2 s + 1) \kappa$.
  Therefore, by \eqref{eqn:delta-pm}, $\offset_x(t) \geq (2 s + 1) \kappa - \delta$,
    so that \eqref{eqn:ft1} is satisfied.
  Further, from the definitions of the local skew and~$\delta$,
    $\calL + \delta \geq \offset_x(t)$.
  Combining the above inequalities yields
  $\calL + \delta \geq (2 s + 1) \kappa - \delta$.
  By assumption, $s$ fulfills $s \leq \ell < \infty$.
  Thus, $\ftone_s$ is set to true.

  Similarly, since~$v$ satisfies the fast condition, all of its neighbors $y$
  satisfy $L_v(t) - L_y(t) \leq (2 s + 1) \kappa \leq \calL$.
  Therefore, $\offset_y(t) \geq -(2 s + 1) \kappa -\delta \geq -\calL-\delta$,
    hence \eqref{eqn:ft2} is satisfied for the same value of $s$.
  From $-(2 s + 1) \kappa -\delta \geq -\calL-\delta$, it is
    $(2 s + 1) \kappa \leq \calL$.
  Thus, $s \leq \ell < \infty$ and $\fttwo_s$ is set to true.
  Consequently, $v$ runs in fast mode at time $t$.

  It remains to show that if $v$ satisfies $\SC$ at time $t$,
  it does not satisfy $\FT$ at time $t$ and is in slow mode.
  Suppose, for contradiction, that $v$ satisfies $\SC$ and $\FT$ at time~$t$.
  Fix $s\in\IN$ such that $\SC$ is satisfied and $s'\in\IN$ such that $\FT$
    is satisfied.
  Then, by \eqref{eqn:delta-pm} and $\SCtwo$,
  \begin{align*}
    \offset_{\max}(t) \leq 2s\kappa+\delta\,.
  \end{align*}
  Thus, by $\FTone$,
    $(2 s' + 1) \kappa - \delta \leq 2 s \kappa + \delta.$
  Since $2\delta < \kappa$, the previous expression implies that $s' < s$.
  Similarly, by \eqref{eqn:delta-pm}, $\SCone$, and $\FTtwo$ we obtain
    $-2s\kappa \geq -(2s'+1)\kappa-\delta$, 
  such that, $s < s' + 1$.
  However, this contradicts $s' < s$.
  Thus $\FT$ cannot be satisfied at time~$t$ if $\SC$ is satisfied at time~$t$, as we assumed.
\end{proof}
%%% - shorten

\noindent
Combining Theorems~\ref{thm:gcs} and~\ref{lem:uncertainty},
  we finally obtain that an implementation of $\OGCS$ fulfilling the
  conditions in Theorems~\ref{lem:uncertainty}, maintains the skew bounds
  in Theorem~\ref{thm:gcs}.

\section{Decomposition into Modules}\label{sec:hardware}

To implement the $\OGCS$ algorithm in hardware, 
  we break down the distributed algorithm in this section into circuit modules.
Here we are concerned with the clock generation network that comprises an arbitrary number of nodes connected by links.
To focus on the clock generation circuitry, we do not
  discuss the circuitry for the data communication infrastructure that can be implemented using gradient clocking and communication links between clocked modules. 

We distinguish between the implementation of a node
  and the implementation of a link.~\footnote{For an in-depth survey of related work and state-of-the-art link-level FIFO buffer controllers, we refer the reader to~\cite{konstantinou2019mesochronous,bund2020synchronizer}.}
Per node, we have a \emph{tunable oscillator} that is responsible for maintaining
  the logical clock of a node, and a \emph{control module} that sets the local clock speed
  if the fast trigger \FT{} is fulfilled.
Per link, we have two \emph{phase offset measurement modules}, one for each node connected by the link,
  that measures the clock offset $\offset_w(t)$ of a node to its neighbor $w$ (see Figure~\ref{fig:fullimpl} for a high-level architecture comprising of the control module, the VCO as the tunable oscillator, and the measurement modules).

\myparagraph{Metastability-Containing Implementation}
The three modules form a control loop:
  Skews are measured and fed into the control module, which acts upon the tunable oscillator.
Any measurement circuit that round-wise measures a continuous variable, in our case, the skew, and outputs a digital representation can be shown to
  become metastable \cite{marino:general}.
In \cite{friedrichs:containing}, a technique to compute with such metastable or unstable signals
  was presented.
The term \emph{metastability-containing} circuit was coined for a circuit that
  guarantees that the provably minimal amount of metastability carries over from the inputs to the
  outputs.
In this work, we will build on this technique and design the circuitry to be
  metastability-containing:
(i) The measurement circuit outputs the minimal number of metastable outputs, (ii) the controller only produces metastable outputs if its inputs are unstable,
and (iii) the tunable oscillator frequency is bounded even in the presence of metastable/unstable signals.

\subsection{Tunable Oscillator}
The logical clock signal of node $v$ is derived from a tunable oscillator.
Each node is associated with its own oscillator that can be tuned in its frequency.
The tunable oscillator module has one input and one output port.
The binary input $\md_v$ controls the mode (slow/fast) of the oscillator.
The binary output $\clk_v$ is the oscillator's binary clock signal whose phase
  is the logical clock~$L_v$.
The oscillator has a maximum response time $\Tosc \geq 0$, by which it
  is guaranteed to change frequency according to the mode signal.
Formally, we require the following conditions to hold:
\begin{equation*}
  \calL(0)<\kappa\,.\tag{C1}\label{eqn:c1}
\end{equation*}
If mode signal of node $v$ is constantly $0$ for time $\Tosc$, the oscillator with
  output $\clk_v$ is in \emph{slow mode} at time~$t$:
\begin{align*}
    \forall t'\in[t-\Tosc,t].\, \md_v(t') = 0 \Rightarrow L_v(t)\in [1,1+\rho]\,.
  \tag{C2}\label{eqn:c2}
\end{align*}
If a mode signal is constantly $1$ for time $\Tosc$, the respective oscillator is in \emph{fast mode} at time~$t$:
\begin{align*}
    &\forall t'\in[t-\Tosc,t].\, \md_v(t') = 1 \Rightarrow\\
    &L_v(t)\in [1+\mu,(1+\mu)(1+\rho)]\,. \tag{C3}\label{eqn:c3}
\end{align*}
Otherwise, the respective oscillator is \emph{unlocked} at time~$t$:
\begin{align*}
    &\exists t',t''\in[t-\Tosc,t].\, \md_v(t') \neq \md_v(t'') \Rightarrow\\
    &L_v(t)\in [1,(1+\mu)(1+\rho)]\,.\tag{C4}\label{eqn:c4}
\end{align*}

\noindent
The requirements on the oscillator are as follows:
if the control signal is stable for
  $\Tosc$ time, the oscillator needs to guarantee the respective frequency.
  At any other time, it is not locked to a fixed mode and may run at any frequency
  between the slowest and fastest possible.
In particular, the unlocked mode may be entered when the mode signal is metastable, unstable,
  or transitioned recently, i.e., an oscillator that satisfies \eqref{eqn:c4} can cope with
  meta-/unstable inputs in the sense that it produces stable outputs.
We stress that a tunable oscillator satisfying \eqref{eqn:c4} is not pausible.

It is an essential requirement of the algorithm that the skew between two nodes cannot increase
  if the algorithm tries to reduce that skew.
We maintain the requirement $\mu > 2\rho$, ensuring that the phase offset
  between the two clocks cannot increase further when a clock in fast mode is
  chasing a clock in slow mode.

\subsection{Phase Offset Measurement Module}\label{sec:measmod}
To check whether the $\FT$ conditions are met, a node~$v$ needs to
  measure the current phase offset $L_w(t)-L_v(t)$ to each neighbor $w$.
This is achieved by a time offset measurement module between $v$ and each neighbor $w$.
Node $v$ has no direct access to $L_w(t)$ as propagation delays are prone to uncertainty.
Hence, a node can only estimate the offset to $w$, where the offset estimate is
  denoted by $\offset_w(t)$.

Inputs to the offset measurement are signals $\clk_v$ and $\clk_w$.
The outputs are denoted by $q^{\pm i}_w(t)$ for $i\in\{1,\ldots,\ell\}$.
They represent a unary encoding of $\offset_w(t)$ of length $2\ell$.
As mentioned before, the offset measurement module may produce metastable estimates.
We next discuss the module's specifications.

\myparagraph{Thresholds}
The algorithm does not require full access to the function
  $\offset_w(t)$, but only to whether $\offset_w(t)$ has reached
  one of the thresholds defined by \eqref{eqn:restateft1} and \eqref{eqn:restateft2}.
$\FT$ defines infinitely many thresholds, \ie, the algorithm has to check for each $s\in\IN$
  whether \eqref{eqn:restateft1} or \eqref{eqn:restateft2} is satisfied.

However, practically the system can only measure finitely many thresholds.
Since the algorithm guarantees a maximum local skew,
  there is a maximum $s$ until which the algorithm needs to check.
Let $\ell\in\IN$ be the largest number such that $(2\ell + 1)\kappa + \delta < \calL$,
  where $\calL$ is the upper bound on the local skew.
Then $\offset_w(t)$ is defined as a binary word of length $2\ell$.
The bits are denoted (from left to right) by $Q^{\ell}_w,\ldots,Q^{1}_w,Q^{-1}_w,\ldots,Q^{-\ell}_w$.
For $i\in\{1,\ldots,\ell\}$ each output bit $Q^{\pm i}_w(t)$ denotes whether
  $\offset_w(t)$ has reached the corresponding threshold.
For example, a module with $\ell=2$ has $4$ outputs $Q^{2}_w$, $Q^{1}_w$, $Q^{-1}_w$, and $Q^{-2}_w$
  corresponding to thresholds $-3\kappa-\delta$, $-\kappa-\delta$, $\kappa-\delta$, and $3\kappa-\delta$.
Each signal $Q^{\ell}_w,\ldots,Q^{1}_w,Q^{-1}_w,\ldots,Q^{-\ell}_w$ is a
  function of time.
For better readability, we omit the function parameter $t$ when it is clear from
  context.

\myparagraph{Decision Separator}
Any realistic hardware implementation of the offset measurement will have to
  account for setup/hold times of the registers it uses.
We dedicate the decision separator $\e$ to account for (small) additional setup/hold times,
  and the effect of a potentially metastable output in case a setup/hold time is violated.
%%% shorten
A visualization of the decision seperator is given in \cref{fig:GCSmetas}.

\begin{figure}
  \centering
  \includegraphics[width=0.7\linewidth]{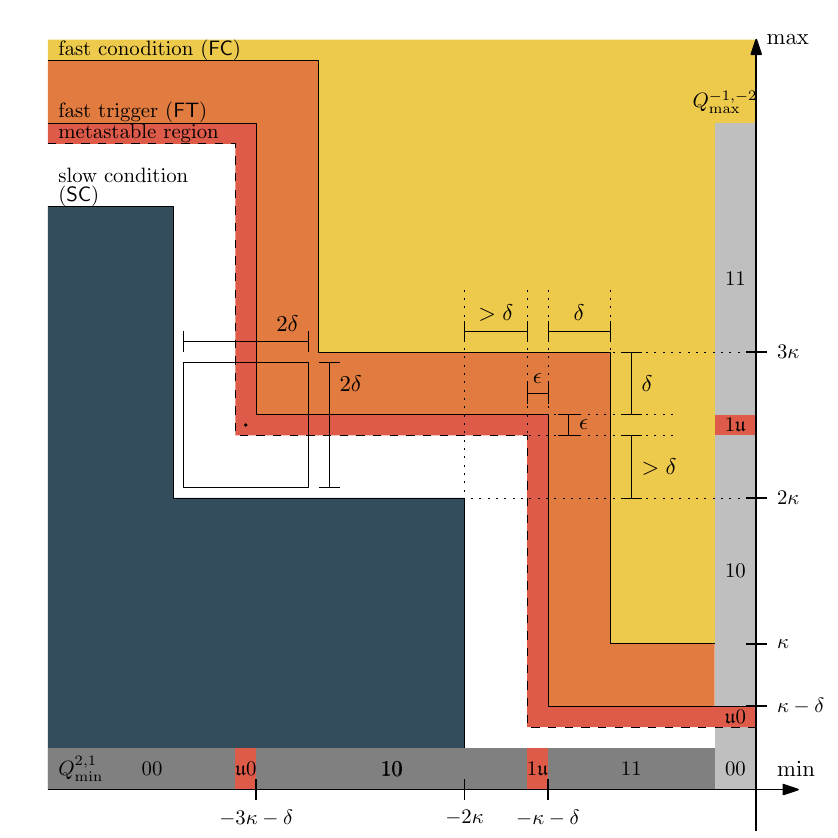}
  \caption{Update of \cref{fig:gcs_large} including the decision separator. As in \cref{fig:gcs_large} we visualize $\FC$ and $\SC$ with respect to $O_{\min}$ and $O_{\max}$.
  $\FT$ and the metastable region are visualized with respect to $\widehat{O}_{\min}$ and $\widehat{O}_{\max}$.
  Along the axis we show the encodings $Q_{\min}^{2,1}$ and $Q_{\max}^{-1,-2}$, marking also
  the effect of meta-/instability.}
  \label{fig:GCSmetas}
\end{figure}
%%% - shorten

We require that signal $Q^{\pm i}_w(t)$ is $1$ at time $t$ if the offset exceeds the $i$th threshold and
  we require that signal $Q^{\pm i}_w(t)$ is $0$ at time $t$ if the offset does not exceed the $i$th threshold.
When the offset is close to the threshold (within $\e$), then we allow that $Q^{\pm i}_w(t)$
  is unconstrained, \ie, $Q^{\pm i}_w(t)\in\{0,\mfu,1\}$,
  where $\mfu$ denotes a metastable or unstable value, e.g., a transition, a glitch,
  or a value between logical 0 and 1.
Formally, we define the module's outputs to fulfill the following:

\begin{definition}[decision separator]\label{def:decsep}
  Let $\e$ be a (small) timespan with $\kappa \gg \e  > 0$.
  At time $t$, we require the following constraints for all $i\in\{1,\ldots,\ell\}$.
  Signal $Q^{\pm i}_w(t)$ is set to $1$ if the offset estimate is larger than $\mp(2i-1) \kappa - \delta$.
  \begin{equation*}
    \begin{split}
      \offset_w(t) \geq -(2i-1) \kappa - \delta \ &\Rightarrow\ Q^i_w(t) = 1 \\
      \offset_w(t) \geq (2i-1) \kappa - \delta \ &\Rightarrow\  Q^{-i}_w(t) = 1
    \end{split}\tag{M1}\label{eqn:m1}
  \end{equation*}
  Signal $Q^{\pm i}_w(t)$ is set to $0$ if the offset measurement is smaller than $\mp(2i-1) \kappa - \delta - \e$.
  \begin{equation*}
    \begin{split}
      \offset_w(t) \leq -(2i-1) \kappa - \delta - \e \ &\Rightarrow\  Q^i_w(t) = 0 \\
      \offset_w(t) \leq (2i-1) \kappa - \delta - \e \ &\Rightarrow\  Q^{-i}_w(t) = 0
    \end{split}\tag{M2}\label{eqn:m2}
  \end{equation*}
  Otherwise, $Q^{\pm i}_w(t)$ is unconstrained, i.e., within $\{0,\mfu,1\}$.
\end{definition}

\noindent
Figure~\ref{fig:clk_meas} (middle) shows the timing of signals $Q^{-1}_w(t)$, $Q^{1}_w(t)$,
  and $Q^{2}_w(t)$ in relation to the clock of neighbor $w$.
When $\clkv$ transitions to $1$, the measurement module takes a snapshot
  of the outputs $Q^{\pm i}_w$.
In Figure~\ref{fig:clk_meas} (right), we show two examples.

Figure~\ref{fig:clk_meas} (left) depicts transitions of the signals $Q^{\pm i}_w(t)$.
The figure shows increasing $\widehat{O}_w$ (along the $x$-axis), resulting in
  more and more bits $Q^{\pm i}_w(t)$ flip to $1$.
The decision separator $\e$ is small enough that
  no two bits can flip at the same time.
If $\widehat{O}_w(t)=0$ we obtain $Q^{i}_w(t)=1$ and $Q^{-i}_w(t)=0$ 
for all $i\in\{0,\ldots,\ell\}$.

Figure~\ref{fig:clk_meas} (middle) also depicts transitions of the signals $Q^{\pm i}_w(t)$,
  but along the $x$-axis $L_v(t)$ increases while $L_w$ is fixed.
We mark time $\mathcal{L}_w$ at which $L_v(t)=L_w$.
A digital implementation is only able to measure the offset on a clock event, \eg,
  a rising clock transition.
Hence, $\mathcal{L}_w$ will be the time where $\clkw$ rises.
When $L_v(t)=L_w$, we have that $\widehat{O}_w(t)=0$, such that
  all bits $Q^{i}_w(t)=1$ and $Q^{-i}_w(t)=0$.
As $L_v(t)$ increases, $\widehat{O}_w(t)$ decreases.
Hence, in Figure~\ref{fig:clk_meas}, (middle) is a mirror image of (left).

\begin{figure*}
  \centering
  \begin{subfigure}{.33\linewidth}
    \includegraphics[width=\linewidth]{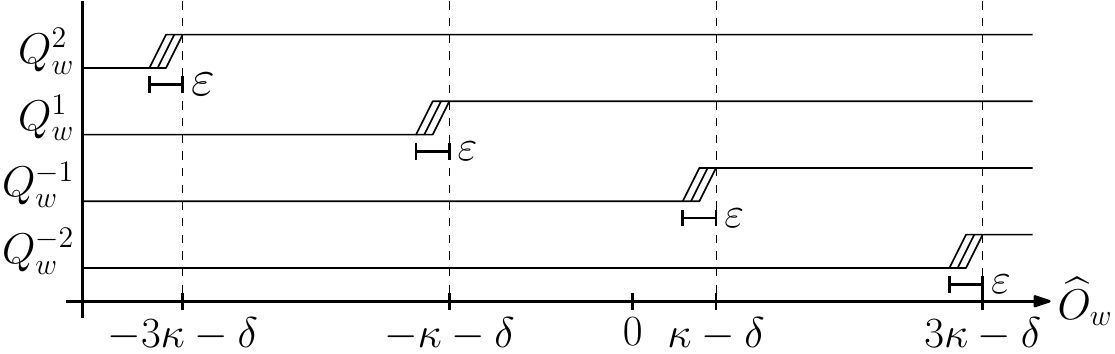}
  \end{subfigure}
  \begin{subfigure}{.33\linewidth}
    \includegraphics[width=\linewidth]{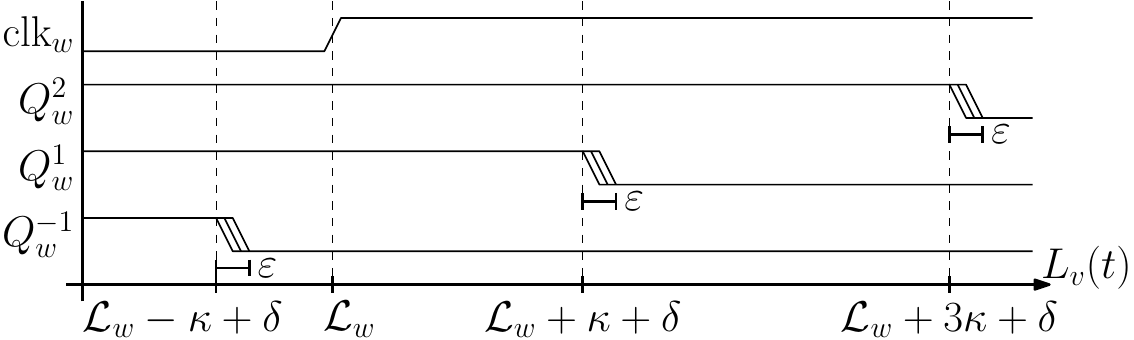}
  \end{subfigure}
  \begin{subfigure}{.33\linewidth}
    \includegraphics[width=\linewidth]{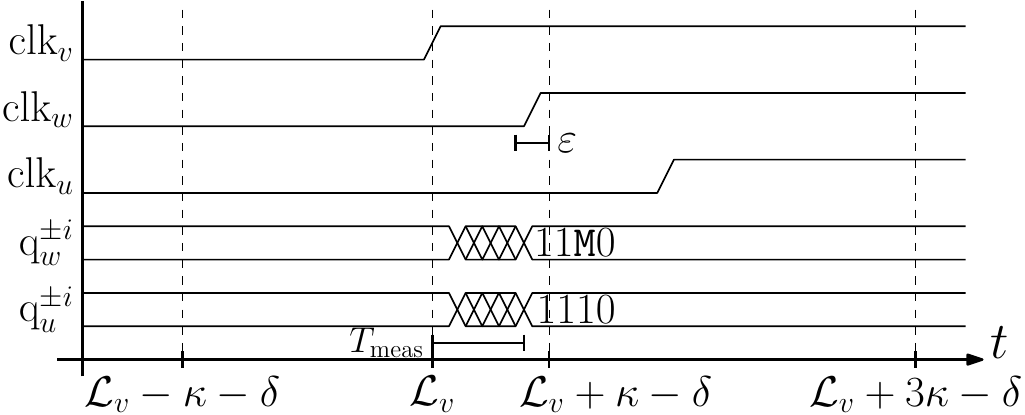}
  \end{subfigure}
  \caption{Timing diagrams of the measurement module including the decision separator.
  (left) Output bits $Q^{\pm i}_w$ relative to the actual measurement $\widehat{O}_w(t)$.
  (middle) Output bits $Q^{\pm i}_w(t)$ relative to the logical clock $L_v(t)$, assuming that $L_w(t)$ is known to $v$.
  (right) Example measurements from node $v$ to nodes $u$ and $w$, relative to Newtonian time $t$.}
  \label{fig:clk_meas}
\end{figure*}

\begin{example}
  Regarding Figure~\ref{fig:clk_meas} (right), a measurement module with $\ell = 2$ can have output $Q_u(t)=1100$ if
  $\kappa-\delta-\e\geq\offset_u(t)\geq-\kappa-\delta$. The output may become $Q_w(t)=11\mfu0$ if
  $\kappa-\delta>\offset_w(t)>\kappa-\delta-\e$.
\end{example}

In general, closely synchronized clocks have output $Q^{\pm i}_w(t)=1^\ell 0^\ell$.
If the clock of $v$ is ahead of $w$'s clock, the measurement $Q^{\pm i}_w(t)$ contains more $0$s than $1$s.
Similarly, if $v$'s clock is behind the clock of $w$, the outputs contain more $1$s than $0$s.
Further, at most one output bit is $\mfu$ at a time if the $\e$-regions in
  Figures~\ref{fig:clk_meas} (left) and (middle) do not overlap:

\begin{lemma}\label{lem:singleM}
  At every time $t$ there is at most a single $i\in\{1,\ldots\ell\}$ such that
  $Q^{\pm i}_w$ is unconstrained.
\end{lemma}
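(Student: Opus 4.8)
The plan is to read off, directly from the decision separator (\cref{def:decsep}), the precise set of offset values at which each individual output bit can be unconstrained, and then to show that these sets are pairwise disjoint. For a fixed sign and index, each bit is governed by a single threshold: by \eqref{eqn:m1} and \eqref{eqn:m2}, the bit $Q^i_w(t)$ is forced to $1$ once $\offset_w(t) \geq -(2i-1)\kappa - \delta$ and forced to $0$ once $\offset_w(t) \leq -(2i-1)\kappa - \delta - \e$; symmetrically, $Q^{-i}_w(t)$ is forced to $1$ for $\offset_w(t) \geq (2i-1)\kappa - \delta$ and to $0$ for $\offset_w(t) \leq (2i-1)\kappa - \delta - \e$. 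Hence each bit is unconstrained only when $\offset_w(t)$ lies in the open interval of width $\e$ strictly below its threshold, i.e.\ in $(\tau - \e, \tau)$, where $\tau$ denotes that bit's threshold.

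First I would collect all $2\ell$ thresholds. They are exactly the values $-(2i-1)\kappa - \delta$ and $(2i-1)\kappa - \delta$ for $i \in \{1, \ldots, \ell\}$; that is, the odd multiples $\pm\kappa, \pm 3\kappa, \ldots, \pm(2\ell-1)\kappa$ of $\kappa$, each shifted by the common offset $-\delta$. Because this common shift cancels when taking differences, any two distinct thresholds differ by an even multiple of $\kappa$, and the smallest such gap is $2\kappa$ --- attained between neighboring thresholds such as $-\kappa - \delta$ and $\kappa - \delta$, or $(2i-1)\kappa - \delta$ and $(2i+1)\kappa - \delta$.

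Next I would argue disjointness of the unconstrained intervals. Each is an interval of length $\e$ lying immediately below its threshold, so two such intervals belonging to thresholds $\tau_1 < \tau_2$ can intersect only if $\tau_2 - \e < \tau_1$, i.e.\ only if $\tau_2 - \tau_1 < \e$. Since distinct thresholds are separated by at least $2\kappa$ and the definition stipulates $\kappa \gg \e > 0$ (in particular $2\kappa > \e$), this never happens, and the $2\ell$ unconstrained intervals are pairwise disjoint. As $\offset_w(t)$ is a single real number at any fixed time $t$, it belongs to at most one of these disjoint intervals, so at most one of the $2\ell$ bits $Q^{\pm i}_w$ can be unconstrained, which is the claim.

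I do not expect a genuine obstacle here: the content is a clean interval-disjointness argument. The only point requiring care is the threshold-gap computation across the sign boundary --- one must confirm that the two innermost thresholds $-\kappa - \delta$ and $\kappa - \delta$ are separated by $2\kappa$ rather than by something smaller, and that the shared $-\delta$ offset contributes nothing to any gap. Everything else follows from $2\kappa > \e$.
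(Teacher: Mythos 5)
Your proof is correct and follows essentially the same route as the paper's: both arguments boil down to observing that each bit can be unconstrained only in an $\e$-wide window just below its own threshold, that consecutive thresholds (including the innermost pair $-\kappa-\delta$ and $\kappa-\delta$) are spaced $2\kappa$ apart, and that $\e < 2\kappa$ forces these windows to be disjoint. The paper merely organizes this as "fix the unconstrained bit $i$ and show every other bit is determined" rather than your "enumerate all $2\ell$ windows and show pairwise disjointness," which is an equivalent presentation of the same fact.
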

%%% shorten
\begin{proof}
  Assume for $i>0$, that $Q^i_w(t)$ is unconstrained. Then we have that
  \begin{equation*}
    - (2i - 1)\kappa-\delta-\e < \widehat{O}_w(t) < -(2i - 1)\kappa - \delta\,.
  \end{equation*}
  Hence, for all $i'<i$ it holds that $\widehat{O}_w(t) < -(2i'-1)\kappa-\delta$,
  such that, by \cref{def:decsep}, $Q^{i'}_w=0$ and $Q^{-j}_w=0$ for all $j$.
  For all $i'>i$ we obtain $\widehat{O}_w(t) > -(2i'-1)\kappa-\delta-\e$,
  as $\kappa > \e$. Thus, by \cref{def:decsep}, $Q^{i'}_w=1$.
  An analogous argument shows that there is only one unconstrained bit if $Q^{-i}_w(t)$ is unconstrained.
\end{proof}
%%% - shorten

\myparagraph{Latency}
Besides setup/hold times, we have to account further for
  propagation delays.
Let $\Tmeas$ denote the maximum end-to-end latency of the
  measurement module, \ie, an upper bound on the elapsed time from when $Q^{\pm i}_w(t)$ is set,
  to when the measurements are available at the output.
More precisely, we require that if $Q^{\pm i}_w(t')$ is set to $x \in \{0,1\}$ for all $t'$ in $[t - \Tmeas, t]$,
  then the corresponding output $\text{q}^{\pm i}_w(t)$ is $x$ as depicted in Figure~\ref{fig:clk_meas} (right).

\subsection{Control Module}\label{sec:conts}
Each node $v$ is equipped with a control module.
Its input is the (unary encoded) time measurement, \ie, bits $\text{q}^{\pm i}_w(t)$,
  for each of $v$'s neighbors.
Output is the mode signal $\modes_v(t)$.

The control module is required to set the mode signal according to
  Algorithm $\OGCS$, i.e., to fast mode if $\FT$ is satisfied, otherwise
  the algorithm defaults to slow mode.
Denote by $\Tctr$ the maximum end-to-end delay of the control module circuit, \ie,
 the delay between its inputs (the measurement offset outputs) and its output
 $\modes_v(t)$.
We then require the following:
If $\OGCS$ continuously maps the algorithm's switch $\gamma(t)$ to $0$ for time $\Tctr$, then
  the output of the control module is $0$ at time~$t$:
\begin{align*}
  \forall t'\in[t-\Tctr,t].\, \gamma_v(t')=0 \Rightarrow \md_v(t) = 0\,.
  \tag{L1}\label{eqn:l1}
\end{align*}
If $\OGCS$ continuously maps the switch $\gamma(t)$ to $1$ for time $\Tctr$, then
  the output of the control module is $1$ at time~$t$:
\begin{align*}
  \forall t'\in[t-\Tctr,t].\, \gamma_v(t')=1 \Rightarrow
  \md_v(t) = 1\,.\tag{L2}\label{eqn:l2}
\end{align*}
Otherwise, the output is unconstrained, i.e., within $\{0,\mfu,1\}$.

Intuitively, $\FT$ triggers when there is an offset that crosses threshold $i$ and
  no other offset is below threshold $-i$ for some $i \in \{1,\hdots,\ell\}$.
Hence, we select the maximum and minimum of the offsets $Q^{\pm i}_w$ to all neighbors $w$.

Since the network also includes self-loops (\cf~\cref{sec:model}),
  each node, conceptually, measures the offset to itself.
The offset to self is always $0$.
In practice, that means that the maximum only needs to consider neighbors that are ahead
  and the minimum only needs to consider neighbors that are behind.
For $i\in\{1,\hdots,\ell\}$, signals $Q^{-i}_w(t)$ indicate whether node $w$ is ahead
  and similar bits $Q^{i}_w(t)$ indicate whether $w$ is behind.
Thus, the $\ell$-bit encodings of maximum ($Q^{-i}_{max}(t)$) and minimum ($Q^{i}_{min}(t)$) are computed as
\begin{align*}
  Q^{-i}_{max}(t)&\coloneqq\bigvee\{ Q^{-i}_w(t) \mid w \text{ is neighbor of } v \}\,,\\
  Q^{i}_{min}(t)&\coloneqq\bigwedge\{ Q^{i}_w(t) \mid w \text{ is neighbor of } v \}\,.
\end{align*}
As $\FT$ is satisfied if $Q^{-i}_{max}(t)$ and $Q^{i}_{min}(t)$ are both $1$ for any $i$ in $\{1,\hdots,\ell\}$.
Signal $\modes_v(t)$ is computed by
\begin{equation*}
  \modes_v(t)\coloneqq\bigvee\{Q^{-i}_{max} \land Q^{i}_{min} \mid i \in \{ 1,\hdots,\ell\} \}\,.
\end{equation*}

\myparagraph{Metastability-containment}
Any metastability-containing implementation of $\modes_v(t)$ has the following properties:
  (i) If the slow condition is satisfied, then $\modes_v(t)=0$
  (ii) if the fast condition is satisfied, then $\modes_v(t)=1$
  (iii) if no condition is satisfied then $\modes_v(t)$ may output $\metas$.
For a formal definition of metastability-containment, we refer the reader to \cite{friedrichs:containing}.
In \cref{sec:impl}, we present a metastability-containing implementation of the
  control module.

\subsection{$\CGCS$ Algorithm}\label{sec:cgcs}

\myparagraph{Clocked Algorithm}\label{par:clockedalg}
We are now in the position to assemble the modules into the so-called Clocked Gradient
  Clock Synchronization ($\CGCS$) algorithm (see \cref{alg:dgcs}).
In the following we prove Theorem~\ref{lem:implements}, showing that the $\CGCS$ algorithm
  implements the $\OGCS$ algorithm, and hence maintains tight skew bounds.
For the measurement module, we defined a possibly metastable assignment if the signal changes within an $\e$ window
  during which it is assigned.
We denote the assignment with propagation delay and possibly
  metastable result by~$\leftarrow_{\mfu}$.

\begin{algorithm}
  \begin{algorithmic}[1]
    \On{each clock tick, at time $t_{clk}$}
      \For{each $i \in \{ 1,\hdots,\ell \}$}
        \For{each adjacent node $w$}
          \State $q^{i}_w \leftarrow_{\mfu} \offset_w(t_{clk}) \geq -(2i-1)\kappa - \delta$
          \State $q^{-i}_w \leftarrow_{\mfu} \offset_w(t_{clk}) \geq (2i-1)\kappa - \delta$
        \EndFor
      \EndFor
    \EndOn
    \On{each time $t$}
      \State $q^{i}_{\min} \leftarrow_{\mfu} \bigwedge\{ q^{i}_w(t) \mid w \text{ is neighbor of } v \}$
      \State $q^{-i}_{\max} \leftarrow_{\mfu} \bigvee\{ q^{-i}_w(t) \mid w \text{ is neighbor of } v \}$
      \State $\modes_v \leftarrow_{\mfu} \bigvee\{q^{i}_{\min}(t) \land q^{-i}_{\max}(t) \mid i \in \{ 1,\hdots,\ell\} \}$
    \EndOn
  \end{algorithmic}
  \caption{Clocked algorithm $\CGCS$ at $v$.
  The assignment $\leftarrow_{\mfu}$ denotes a possibly unstable assignment.}
  \label{alg:dgcs}
\end{algorithm}

\myparagraph{The $\CGCS$ implements the $\OGCS$ Algorithm}
An essential difference of the $\CGCS$ to the continuous time
  $\OGCS$ algorithm is that measurements are performed only at discrete
  clock ticks.
We will, however, show that the clocked algorithm implements the $\OGCS$ algorithm,
  with a properly chosen measurement error $\delta$ that accounts for the fact that
  we measure clock skew only at discrete points in time rather than continuously.

For that purpose, we denote the maximum end-to-end latency of the computation by $\Tmax$. 
This end-to-end latency combines the delays of the three modules, \ie, $\Tmax = \Tmeas+\Tctr+\Tosc$.
Thus, $\Tmax$ is the time it takes from a rising clock edge until the oscillator guarantees a stable rate.
For a simple implementation, $\Tmax$ naturally becomes a lower bound on the clock period.
Designs with a clock period beyond $\Tmax$ are possible when buffering measurements
  and mode signals.

\begin{example}
A timing diagram with the module outputs and the clock rate is given in \cref{fig:tmax}.
The offset measurement switches from $1100$ (close to synchronous)
  to $1110$ ($v$ lagging behind $w$) and causes the oscillator to go to fast mode.
\end{example}

\begin{figure}
  \centering
  \includegraphics[width=.7\linewidth]{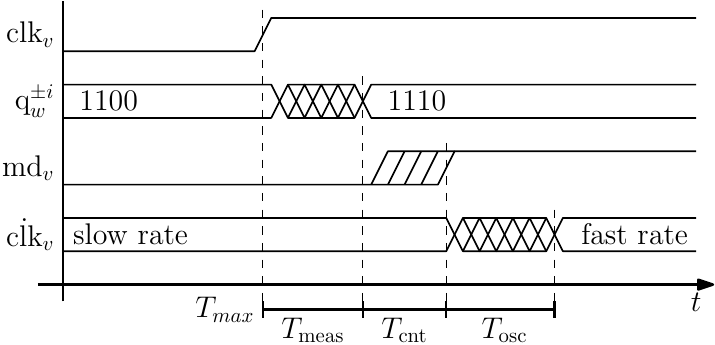}
  \caption{Example timing diagram of the control module, depicting signals $\clkv$, $q^{\pm i}_w$, $\md_v$ and rate $\dot{\clk}_v$ over Newtonian time $t$.}
  \label{fig:tmax}
\end{figure}

\noindent
We are now in the position to relate the module delays to~$\delta$.
We split $\delta$ into two parts, the \emph{propagation delay uncertainty} and the
  \emph{maximum end-to-end latency}.
The propagation delay uncertainty accounts for variations in the time a signal takes to propagate from a node's oscillator to the measurement module of its neighbors.
Suppose clock signals arrive at the measurement module with a larger or smaller
delay than expected (usually due to variation in the fabrication process or
environmental influences), then the module may measure larger or smaller offsets.
We denote the propagation delay uncertainty by $\delta_0$.

The second source of error is the drift of the clocks when not measuring.
The offset is measured once per clock cycle and it is used until the next measurement
is made. During this time, the actual offset may change due to different modes
and drift of oscillators. We denote the duration of a clock cycle (in slow mode
with no drift) by $\Tclk$. The maximum difference in rate between
any two logical clocks is bounded by $(1+\rho)(1+\mu)-1=\rho + \mu + \rho \mu$.
Thus, the maximum change of the offset during a clock cycle is at most
$$(\rho + \mu + \rho \mu)(\Tclk+\Tmax)\:.$$ This is the second contribution to
the uncertainty of the measurement.
Summing up both contributions, the measurement error becomes
$$\delta = \delta_0 + (\rho + \mu + \rho \mu) \cdot (\Tclk+\Tmax)\:.$$
Formally, we have the following:

\begin{lemma}\label{lem:clkdelta}
  Let $\delta = \delta_0 + (\rho + \mu + \rho \mu) \cdot \Tclk$, then $\CGCS$
  satisfies Inequality~\eqref{eqn:delta-pm} at all times $t$.
\end{lemma}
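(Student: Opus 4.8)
The plan is to bound the estimation error at an arbitrary time $t$ by splitting it into the two sources identified in the preceding discussion: the propagation delay uncertainty present at the instant the estimate is latched, and the drift of the true offset between that instant and $t$. First I would fix $t$ and let $t_{clk}\le t$ be the most recent clock tick of $v$, so that the offset estimate consulted at time $t$ is the value $\offset_w(t_{clk})$ that was sampled and held according to \cref{alg:dgcs}. By the definition of the propagation delay uncertainty $\delta_0$, which is exactly the absolute timing variation between what the measurement module records and the true offset present at the sampling instant, the latched value obeys $\left|\offset_w(t_{clk})-(L_w(t_{clk})-L_v(t_{clk}))\right|\le\delta_0$.

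Next I would control how far the true offset $L_w-L_v$ can move over the window $[t_{clk},t]$. Invariant~\eqref{eqn:i2} together with the hardware-clock bound $1\le h_v\le 1+\rho$ forces every logical clock rate into $[1,(1+\mu)(1+\rho)]$, so the derivative of $L_w-L_v$ has absolute value at most $(1+\mu)(1+\rho)-1=\rho+\mu+\rho\mu$. Integrating over $[t_{clk},t]$ gives $\left|(L_w(t)-L_v(t))-(L_w(t_{clk})-L_v(t_{clk}))\right|\le(\rho+\mu+\rho\mu)(t-t_{clk})$. Since the slowest admissible logical clock rate is $1$ (slow mode, no drift), the Newtonian time that may elapse between two consecutive ticks of $v$ is at most one clock cycle $\Tclk$, whence $t-t_{clk}\le\Tclk$ and the drift contribution is at most $(\rho+\mu+\rho\mu)\Tclk$.

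Combining the two estimates with the triangle inequality yields $\left|\offset_w(t)-(L_w(t)-L_v(t))\right|\le\delta_0+(\rho+\mu+\rho\mu)\Tclk=\delta$, which is precisely~\eqref{eqn:delta-pm}; as $t$ was arbitrary, the bound holds for all times. I expect the main obstacle to be making the \emph{age} of a held estimate rigorous, i.e.\ arguing that the sample in effect at any time is never older than one cycle, which hinges on the unit-rate lower bound to cap the inter-tick spacing by $\Tclk$. A related point to handle with care is whether the end-to-end control latency $\Tmax$ must be folded into this window: if the latched bits are consumed only after traversing the measurement and control modules, the relevant horizon grows to $\Tclk+\Tmax$ and the drift term scales accordingly, matching the value of $\delta$ stated just before the lemma. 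The remaining algebra is routine.
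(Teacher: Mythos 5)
Your proof is correct and follows essentially the same route as the paper's: bound the sampling error at the most recent tick by $\delta_0$, bound the drift of the true offset over the at-most-$\Tclk$ window by $(\rho+\mu+\rho\mu)\Tclk$ via the logical-clock rate bounds, and combine with the triangle inequality. The $\Tmax$ concern you flag at the end is well spotted --- the discussion preceding the lemma uses $\Tclk+\Tmax$ while the lemma and the paper's own proof use only $\Tclk$ --- but the paper does not resolve this discrepancy either.
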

%%% shorten
\begin{proof}
  The algorithm measures the offset $\offset_w(t)$ at each clock tick. Hence,
  we show that between two clock ticks the uncertainty never grows beyond $\delta$.
  Let $t_{clk}$ and $t'_{clk}$ be two consecutive clock ticks at node $v$.
  By the specification above, the measurement at time $t_{clk}$ has precision $\delta_0$, such that
  \begin{equation*}
    \abs{\offset_w(t_{clk})-(L_w(t_{clk})-L_v(t_{clk}))}\leq\delta_0\,.
  \end{equation*}
  During time interval $[t_{clk},t'_{clk})\leq\Tctr$ the clock rates may be different
  for neighbors. The difference between logical clocks grows at most by
  $(1+\rho)(1+\mu)\cdot\Tclk-\Tclk=(\rho + \mu + \rho \mu) \cdot \Tclk$, such that
  for $t\in[t_{clk},t'_{clk})$,
  \begin{align*}
    & \abs{\offset_w(t) - (L_w(t) - L_v(t))} \\
    & \leq \abs{\offset_w(t_{clk})-(L_w(t_{clk}) - L_v(t_{clk}))}+(\rho + \mu + \rho \mu) \cdot \Tclk\\
    & \leq \delta_0 + (\rho + \mu + \rho \mu)\cdot\Tclk\,.
  \end{align*}
  Hence, at every time the error is at most $\delta$, such that Inequality~\eqref{eqn:delta-pm} is satisfied.
\end{proof}
%%% - shorten

\noindent
We are now in the position to prove the section's main result:
  under certain conditions on the algorithm's parameters,
  $\CGCS$ implements $\OGCS$.
If, in addition, the algorithm's parameters fulfill the conditions
  in Theorem~\ref{lem:uncertainty}, it follows that
  the skew bounds from the \textsf{GCS} algorithm apply to $\CGCS$.

\begin{theorem}
  \label{lem:implements}
  Algorithm $\CGCS$ is correct, i.e., it maintains the skew bounds in Theorem~\ref{thm:gcs},
    if its parameters $\e$, $\delta_0$, and $\rho$ fulfill
    constraints \eqref{eqn:c1}\,--\,\eqref{eqn:c4}, \eqref{eqn:m1}, \eqref{eqn:m2},
    \eqref{eqn:l1}, and \eqref{eqn:l2} and parameters $\mu$, $\kappa$, and $\ell$
    are chosen according to Theorem~\ref{lem:uncertainty}.
\end{theorem}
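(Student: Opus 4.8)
The plan is to show that $\CGCS$ satisfies the four invariants \eqref{eqn:i1}--\eqref{eqn:i4} of \cref{def:gcsimpl}, so that it qualifies as a \gcs{} algorithm and \cref{thm:gcs} delivers the stated skew bounds. Since the hypotheses on $\mu$, $\kappa$, and $\ell$ are exactly those of \cref{lem:uncertainty}, the proof reuses that lemma's structure; the genuinely new work is to verify that the three hardware modules (measurement, control, oscillator) faithfully realize the fast-trigger-based mode control despite discrete sampling, propagation latency, and metastability. Invariant \eqref{eqn:i1} ($\mu>\rho$) is immediate from $\mu>2\rho$. Invariant \eqref{eqn:i2} follows from the oscillator specification: the logical-clock rate equals $h_v$ in slow mode and $(1+\mu)h_v$ in fast mode and lies between these while unlocked (\eqref{eqn:c2}--\eqref{eqn:c4}), so integrating over $[t,t']$ yields $L_v(t')-L_v(t)\in[1,1+\mu]\cdot(H_v(t')-H_v(t))$. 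The $\delta$-bound \eqref{eqn:delta-pm} demanded as hypothesis~(ii) of \cref{lem:uncertainty} is supplied by \cref{lem:clkdelta} together with the extra drift accrued over the end-to-end latency $\Tmax$: the held estimate that actually governs the oscillator at time $t$ was sampled at a clock tick no earlier than $t-\Tclk-\Tmax$, so its deviation from the true skew $L_w(t)-L_v(t)$ is at most $\delta_0+(\rho+\mu+\rho\mu)(\Tclk+\Tmax)=\delta$.

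The heart of the argument is \eqref{eqn:i3} and \eqref{eqn:i4}, which I would establish by tracing a signal through the control loop. For \eqref{eqn:i3}, suppose $\FC$ holds at $t$ with witness $s$, i.e.\ $O_{\max}(t)\geq(2s+1)\kappa$ and $O_{\min}(t)\geq-(2s+1)\kappa$ in the true-skew coordinates. Combining these with the $\delta$-bound gives $\hatOmax\geq(2s+1)\kappa-\delta$ and $\hatOmin\geq-(2s+1)\kappa-\delta$, so by the decision separator \eqref{eqn:m1} (the ``$\geq$ threshold $\Rightarrow 1$'' direction, which leaves no room for the $\e$-dead-zone) the bit $Q^{-(s+1)}$ of the leading neighbor and every bit $Q^{s+1}_w$ are \emph{robustly} $1$. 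Hence the control module's conjunction $Q^{-(s+1)}_{\max}\wedge Q^{s+1}_{\min}$ forces the trigger $\gamma_v=1$; the finitely many thresholds $i\in\{1,\dots,\ell\}$ suffice because $s+1\leq\ell$ by the choice of $\ell$. Then \eqref{eqn:l2} propagates $\md_v=1$ and \eqref{eqn:c3} places the oscillator in fast mode. For \eqref{eqn:i4}, suppose $\SC$ holds at $t$; I would reuse the mutual-exclusion argument from the proof of \cref{lem:uncertainty}, deriving $s'<s$ and $s<s'+1$ from $\FT\wedge\SC$ (a contradiction once $\kappa>2\delta$), so every triggering conjunction fails, the relevant bits are robustly $0$ via \eqref{eqn:m2}, and \eqref{eqn:l1} together with \eqref{eqn:c2} place the oscillator in slow mode. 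Throughout, \cref{lem:singleM} guarantees that at most one measurement bit is ever $\mfu$, and metastability-containment ensures that this single unstable bit cannot flip a conjunction whose value is already pinned down by the stable bits, so the decision stays robust.

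The main obstacle is the joint handling of the three nonidealities in \eqref{eqn:i3}/\eqref{eqn:i4}: a \emph{pointwise} hypothesis ($\FC$ or $\SC$ at the single instant $t$) must be converted into a mode signal that is stable throughout the oscillator's entire response window $[t-\Tosc,t]$, because \eqref{eqn:c3}/\eqref{eqn:c2} only guarantee the target frequency after the mode signal has been constant for $\Tosc$. I would close this gap with the drift bound: since offsets change at rate at most $\rho+\mu+\rho\mu$, whenever $\FC$ (resp.\ $\SC$) holds at $t$ it held with only slightly eroded margin at every clock tick in the preceding window of length $\Tmax$, and that erosion is precisely what the $\Tmax$-term in $\delta$ absorbs. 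Consequently the held measurements robustly indicate fast (resp.\ slow) across the window, $\gamma_v$ and then $\md_v$ are constant there, and the oscillator is genuinely locked at $t$. Verifying that the chosen $\kappa>2\delta$ and $\kappa\gg\e$ leave enough slack for this backward propagation simultaneously at all $\ell$ thresholds is the technical crux; the remaining steps are the routine threshold bookkeeping already rehearsed in \cref{lem:uncertainty}. Once \eqref{eqn:i1}--\eqref{eqn:i4} are in hand, \cref{thm:gcs} finishes the proof.
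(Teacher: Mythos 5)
Your proposal is correct and reaches the same conclusion through the same ingredients (Lemma~\ref{lem:clkdelta} for \eqref{eqn:delta-pm}, the module constraints \eqref{eqn:m1}, \eqref{eqn:m2}, \eqref{eqn:l1}, \eqref{eqn:l2}, \eqref{eqn:c2}--\eqref{eqn:c4}, and finally Theorem~\ref{thm:gcs}), but it is organized differently from the paper's proof. The paper argues at the level of algorithms: it establishes a line-by-line correspondence between $\CGCS$ and $\OGCS$ (lines 2--5 of $\CGCS$ match lines 2--3 of $\OGCS$ via \eqref{eqn:m1}/\eqref{eqn:m2}, lines 7--8 match the $\min/\max$ computation via Boolean logic of a metastability-containing implementation, and line 9 matches the mode switch via \eqref{eqn:l1}/\eqref{eqn:l2} and \eqref{eqn:c2}/\eqref{eqn:c3}, with \eqref{eqn:c4} covering metastable assignments), and then invokes Theorem~\ref{lem:uncertainty} as a black box on the resulting $\OGCS$ instance. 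You instead verify the invariants \eqref{eqn:i1}--\eqref{eqn:i4} of \cref{def:gcsimpl} directly for the hardware, effectively inlining the proof of Theorem~\ref{lem:uncertainty} into the signal-tracing argument. The payoff of your route is that it makes explicit a subtlety the paper's correspondence argument leaves implicit: \eqref{eqn:c2}/\eqref{eqn:c3} only guarantee the target frequency after the mode signal has been constant for $\Tosc$, so the pointwise hypothesis ``$\FC$ holds at $t$'' must be propagated backward over the window $[t-\Tosc,t]$; you correctly close this gap by observing that offsets drift at rate at most $\rho+\mu+\rho\mu$ and that the $(\Tclk+\Tmax)$ term folded into $\delta$ absorbs exactly this erosion (note that the formal statement of Lemma~\ref{lem:clkdelta} only carries the $\Tclk$ term, whereas the surrounding text and your argument use $\Tclk+\Tmax$ --- yours is the version actually needed for the end-to-end latency). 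The paper's factoring through ``$\CGCS$ implements $\OGCS$'' buys modularity and brevity; your direct verification buys a more airtight treatment of the timing of the control loop.
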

%%% shorten
\begin{proof}
By choosing $\delta$ as in Lemma~\ref{lem:clkdelta}, 
  \Cref{eqn:delta-pm} is satisfied.
A bounded local skew at all times $t \geq 0$ follows from \eqref{eqn:c1}
  and Theorem~\ref{lem:uncertainty}.
This further implies the finiteness of parameter $\ell$.
For the correct choice of $\ell$, lines $2$--$5$ in $\CGCS$ correspond to
  lines $2$ and $3$ of $\OGCS$ according to \eqref{eqn:m1} and
  \eqref{eqn:m2}.
Given a metastability-containing implementation, line $7$ (respectively\ $8$)
  of $\CGCS$ corresponds to line $4$ (respectively\ $5$) of $\OGCS$ according
  to Boolean logic.
Line $9$ of $\CGCS$ corresponds to lines $6$--$12$ of $\OGCS$, where
  switching to fast (respectively\ slow) mode is ensured by \eqref{eqn:l1} and
  \eqref{eqn:c2} (respectively\ \eqref{eqn:l2} and \eqref{eqn:c3}).
In case of a metastable assignment, \eqref{eqn:c4} ensures a correct behavior of the oscillator.

If constraints \eqref{eqn:c1}\,--\,\eqref{eqn:c4}, \eqref{eqn:m1},
  \eqref{eqn:m2}, \eqref{eqn:l1}, and \eqref{eqn:l2} are fulfilled,
  then $\CGCS$ implements $\OGCS$.
By Theorem~\ref{lem:uncertainty}, every algorithm that implements
  $\OGCS$ and satisfies the constraints in Theorem~\ref{lem:uncertainty}
  maintains the skew bounds of Theorem~\ref{thm:gcs}.
\end{proof}
%%% - shorten

\section{Hardware Implementation}\label{sec:impl}

We next present a hardware implementation of the $\CGCS$ algorithm, which we refer to as \BFM. 
We then discuss its performance and how the system's parameters affect the achieved skews.
For the latter, we designed an ASIC in the \SI{15}{\nano\metre} FinFET-based NanGate
  \textsc{ocl}~\cite{martins2015open} technology.
The design is laid out and routed with Cadence Encounter,
  which is also used for the extraction of parasitics and timing.
Local clocks run at a frequency of approximately \SI{2}{\giga\hertz},
  controllable within a factor of $1 + \mu \approx 1 + 10^{-4}$.
We use a larger factor $\mu$ to make the interplay of $\rho$ and $\mu$ better
  visible.
We compile two systems of $4$ \resp $7$ nodes connected in a line.
To resemble a realistically sparse spacing of clock regions,
  we placed nodes at distances of \SI{200}{\micro\metre}.
Hence, the PALS systems shown in the simulations are designed to 
cover floorplans of width $\SI{200}{\micro\metre}$ and length 
  $\SI{800}{\micro\metre}$ respectively\ $\SI{1.4}{\milli\metre}$.

\myparagraph{Offset Measurement}
\Cref{fig:measdline} shows a linear TDC-based circuitry for the
  module which measures the time offsets between nodes $v$ and $w$.
Buffers are used as delay elements for incoming clock pulses.
The offset is measured in steps of $2\kappa$, hence, buffers in the upper delay line
  have a delay of $2\kappa$.
The delay line is tapped after each buffer for corresponding $Q^{\pm i}_w$.
A chain of flip-flops takes a snapshot of the delay line by sampling the taps.
We require $Q^{-i}_w=0$ and $Q^{i}_w=1$, for all $i$, when
$\widehat{O}_w\geq -\kappa-\delta$ and $\widehat{O}_w\leq \kappa-\delta-\e$ according 
  to \eqref{eqn:m1} and \eqref{eqn:m2}.
Thus, we delay $\clk_v$ by $5\kappa+\delta+\e$.
The decision separator $\e$ accounts for the critical setup/hold window of the
  flip-flop.

\begin{example}
  If both clocks are perfectly synchronized, \ie, $L_v = L_w$, then the state of the flip-flops will
  be $Q^{3}_wQ^{2}_wQ^{1}_wQ^{-1}_wQ^{-2}_wQ^{-3}_w = 111000$ after a rising transition of $\clkv$.
  Now, assume that clock $w$ is ahead of clock $v$, say by a small $\e > 0$ more than
  $\kappa + \delta$, \ie, $L_w = L_v + \kappa - \delta + \e$.
  For the moment assuming that we do not make a measurement error,
  we get $\offset_w = L_w - L_v = \kappa - \delta + \e$.
  From the delays in \cref{fig:measdline} one verifies that in this case,
  the flip-flops are clocked before clock $w$ has reached the second flip-flop with
  output $Q^1_w$, resulting in a snapshot of $110000$.
  Likewise, an offset of $\offset_w = L_w - L_v = 3\kappa - \delta + \e$
  results in a snapshot of $100000$.
\end{example}

\begin{figure}
  \centering
  \includegraphics[width=.9\linewidth]{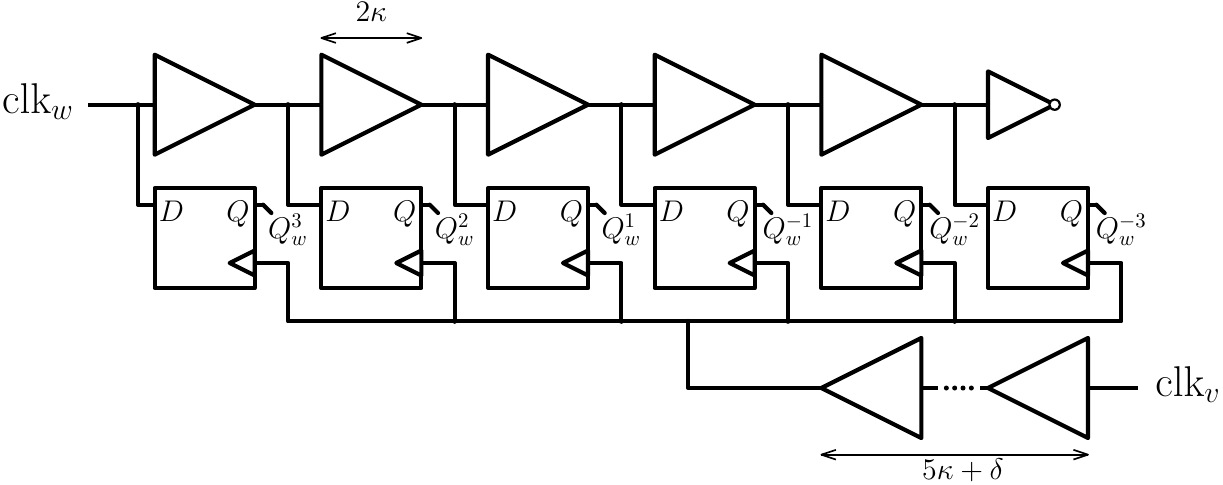}
  \caption{Schematic of the time offset measurement module for $\ell=3$.}
  \label{fig:measdline}
\end{figure}

\myparagraph{Control Module}
Given node $v$'s time offsets to its neighbors in unary encoding, the control module
  computes the minimal and maximal threshold levels which have been reached.
The circuit in Figure~\ref{fig:controlm} implements the control module for $3$ neighbors $w_1$, $w_2$, and $w_3$.
As described in \cref{sec:conts}, we only need to compute the maximal value of bits $Q^{-i}_w$
and the minimal value of bits $Q^{i}_w$; which can be easily computed
  by an $\OR$ respectively $\AND$ over all neighbors.

Given the maximal and minimal values, the circuit in
  \cref{fig:mode} computes $\FT$ and sets $\md_v$ to $1$ if it holds.

\begin{figure}
  \centering
  \begin{subfigure}{.33\linewidth}
    \centering
    \includegraphics[width=.8\linewidth]{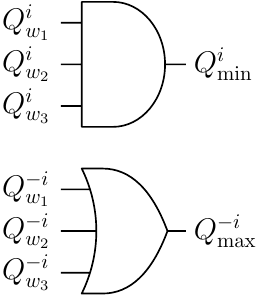}
    \caption{minimum and maximum offset, for $i\in\{1,2,3\}$}
    \label{fig:minmax}
  \end{subfigure}
  \begin{subfigure}{.6\linewidth}
    \centering
    \includegraphics[width=\linewidth]{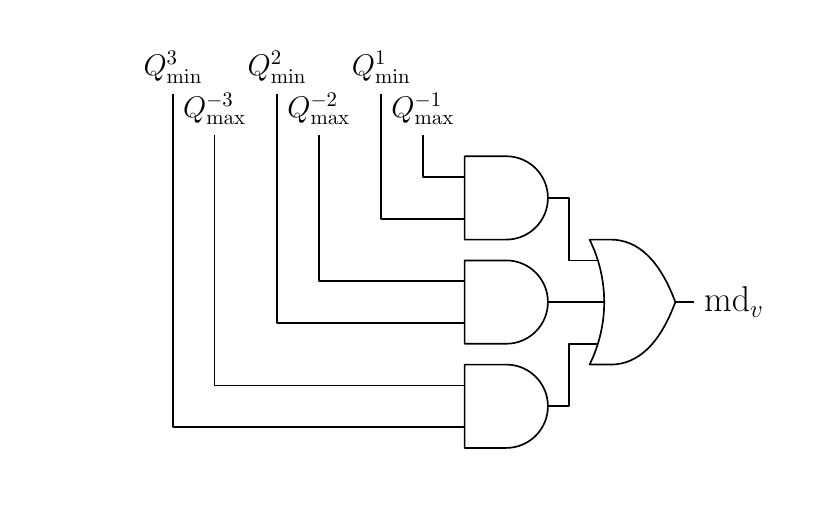}
    \caption{mode signal}
    \label{fig:mode}
  \end{subfigure}
  \caption{Schematics of the control module for three neighbors ($w_1$, $w_2$, and $w_3$).}
  \label{fig:controlm}
\end{figure}

\myparagraph{Metastability-Containing Control Module}
As described in \cref{sec:hardware}, the inputs to the control module may be metastable, i.e., unknown or even oscillating signals between ground and supply voltage. 
It remains for us to show that the control module, given in Figure~\ref{fig:controlm},
  fulfills specifications \eqref{eqn:l1} and \eqref{eqn:l2}.
In particular, we need to ensure that the specifications are met for
  metastable inputs.

The circuit in Figure~\ref{fig:controlm} follows from the definition of $\modes_v$
  in \cref{sec:conts}, when replacing each conjunction (respectively\ disjunction) by an $\AND$
  (respectively\ $\OR$) gate.
Due to the masking properties of $\AND$ and $\OR$ gates, the output $\modes_v$ can
  only become metastable when there is an $i$ such that one of $Q^{i}_{\min}$
  or $Q^{-i}_{\max}$ is $\metas$ and the other is $1$ or $\metas$.
This is only the case when \cref{eqn:l1} and \cref{eqn:l2} do not apply.
It follows that the output of the control module can only become metastable when
  the mode signal is unconstrained and the conditions are met.

\myparagraph{Tunable Oscillator}
As a local clock source, we use a ring oscillator inspired by
  the starved inverter ring presented in~\cite{ghai2009design}.
We use a ring of inverters, where some inverters
  being current-starved-inverters, to set the frequency to either fast mode or slow mode.
Nominal frequency is around \SI{2}{\giga\hertz}, controllable by a factor
  $1 + \mu \approx 1 + 10^{-4}$ via the $\modes_v$ signal.
For our simulations, we choose $\rho \approx \mu/10\approx 10^{-5}$, assuming a stable oscillator.
While this requirement poses a challenge to an oscillator design, it can be relaxed in different ways: (i) By choosing larger parameters $\mu$ and $\rho$ such that their ratio remains fixed, i.e., $\mu/\rho=10$, this forces us to choose a larger $\kappa$, and hence requires to measure larger time offsets.  
This is at the cost of a larger skew and circuit (this follows from combining Theorem~\ref{thm:gcs}, Theorem~\ref{lem:uncertainty}, and Lemma~\ref{lem:clkdelta}).
(ii) By locking the local oscillators to a central quartz oscillator. The problem is different from building a balanced clock tree since the quartz's skew can be neglected here.
(iii) We conjecture that with a refined analysis of the algorithm: rather than absolute drift, the drift with respect to a neighboring oscillator is determinant. Neighboring oscillators show reduced drift due to common cause effects.

The tunable ring oscillator comes with the advantage that for any input voltage 
  it runs at a speed between fast and slow mode, hence, \eqref{eqn:c4} is 
  satisfied.
In the following paragraph, we define an upper bound on $\Tosc$ such that
  constraints \eqref{eqn:c2} and \eqref{eqn:c3} are satisfied.

\myparagraph{Timing Parameters}
We next discuss how the modules' timing parameters relate to the extracted
  physical timing of the above design.

The time $\Tosc$ required for switching between oscillator modes is about the
  delay of the ring oscillator, which in our case is about
  $1/(2\cdot \SI{2}{\giga\hertz}) = \SI{250}{\pico\second}$.
An upper bound on the measurement latency ($\Tmeas$) plus the controller
  latency ($\Tctr$) is given by a clock cycle (\SI{500}{\pico\second}) plus
  the delay (\SI{25}{\pico\second})
  of the circuitry in Figure~\ref{fig:controlm}.
In our case, delay extraction of the circuit yields
  $\Tmeas+\Tctr < \SI{500}{\pico\second} + \SI{25}{\pico\second}$.
We thus have, $\Tmax < \Tmeas + \Tctr + \Tosc = \SI{775}{\pico\second}$.

The uncertainty, $\delta_0$, in measuring if $\offset_w$ has reached a certain threshold is given by the
  uncertainties in latency of the upper delay chain plus the lower delay chain in \cref{fig:measdline}.
For the described naive implementation using an uncalibrated delay line, this would be problematic:
Extracting delays from the design after layout, the constraints
  from \cref{lem:uncertainty} were met for delay uncertainties of $\pm 1\%$, but not for the
  $\pm 5\%$ we targeted.
We thus redesigned the Offset Measurement circuit as described in the following.

\myparagraph{Improved Offset Measurement}
\Cref{fig:bettermeas} shows an improved TDC-type offset measurement circuit that
  does not suffer from the problem above.
Conceptually, the TDC of the node $v$ that measures offsets \wrt node $w$ is
  integrated into the local ring oscillator of the neighboring node $w$.
If $w$ has several neighbors, \eg, up to $4$ in a grid,
  they share the taps but have their own flip-flops within the node $w$.

\Cref{fig:bettermeas} presents a design for $\ell=2$ with $4$ taps and a single neighbor $v$.
In our hardware implementation we set $\ell=2$, as even for $\mu/\rho=10$
  this is sufficient for networks of diameter up to around $80$
  (see how to choose this set of thresholds in the specification of this module in \cref{sec:hardware}).
The gray buffers at the offset measurement taps decouple the load of the remaining circuitry.
At the bottom of the ring oscillator, an odd number of starved inverters are used to set the slow or fast mode for node $w$.
The delay elements at the top are inverters instead of buffers to achieve a latency of $\kappa = \SI{10}{\pico\second}$.
We inverted the clock output to account for the negated signal at the tap of clock $w$ at the top.

\begin{figure}
\centering
  \includegraphics[width=0.6\linewidth]{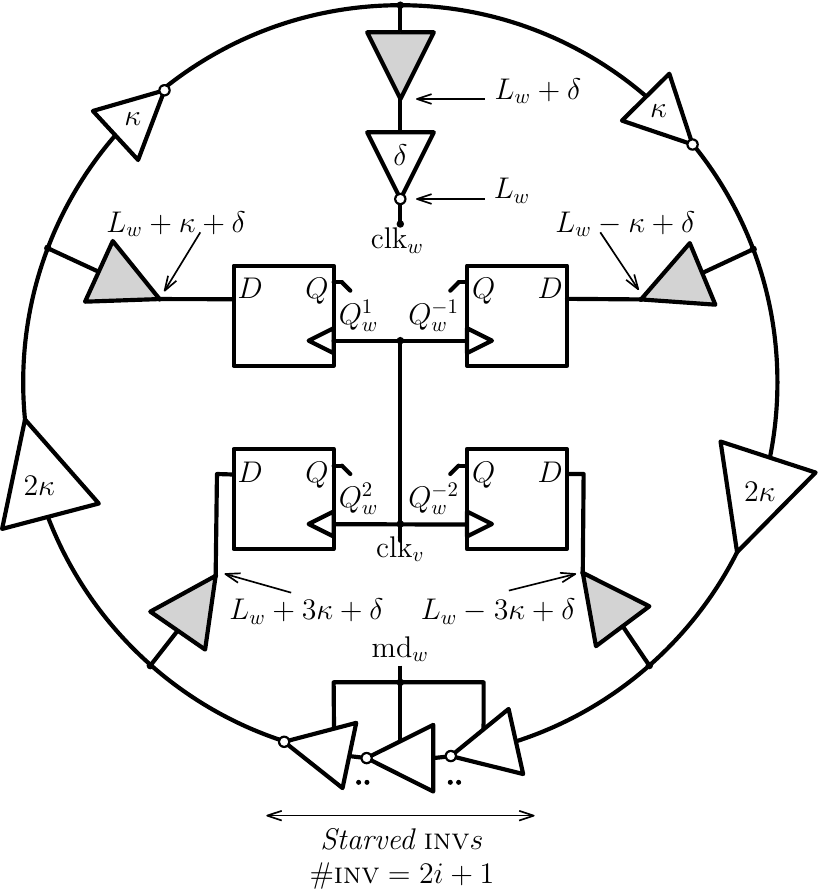}
  \caption{Schematic of the improved offset measurement implementation.
  Node $v$'s offset measurement is integrated into $w$'s ring oscillator.
  Labels of the delay elements denote their delay.
  We also annotate the measured phase offsets.}
  \label{fig:bettermeas}
\end{figure}

When integrating the measurement into the ring oscillator, the constraints
  \eqref{eqn:c2}\,--\,\eqref{eqn:c4} and \eqref{eqn:m1}, \eqref{eqn:m2} are still
  met for a suitable choice of $\Tosc$.
Integration of the TDC into $w$'s local ring oscillator greatly reduces uncertainties
  at both ends: (i) the uncertainty at the remote clock port (of node $w$) is removed to a large extent since the
  delay elements which are used for the offset measurements are part of $w$'s oscillator,
  and (ii) the uncertainty at the local clock port is greatly reduced by removing the delay line
  of length $5\kappa + \delta$.
The remaining timing uncertainties are the latency from taps to the D-ports of the flip-flops
  and from clock $v$ to the $\clkv$-ports of the flip-flop.
Timing extraction yielded $\delta_0 < \SI{4}{\pico\second}$
  in presence of $\pm 5\%$ gate delay variations.

\myparagraph{Full Hardware Implementation}
\begin{figure}
\centering
  \includegraphics[width=\linewidth]{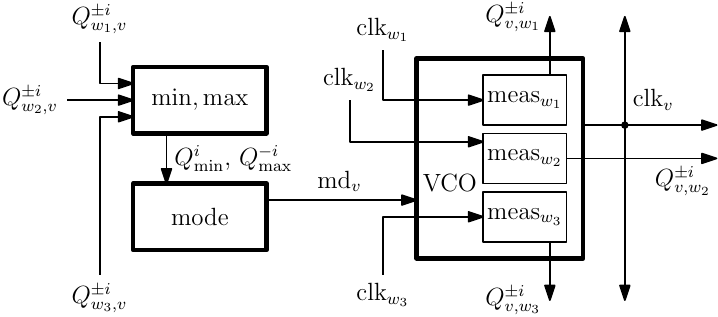}
  \caption{Schematic of a node $v$ with neighbors $w_1$, $w_2$, and $w_3$, using the improved offset measurement integrated in the $\operatorname{VCO}$. Note that measurement $Q^{\pm i}_{a,b}$ is the difference between nodes $a$ and $b$, as measured by $b$.}
  \label{fig:fullimpl}
\end{figure}
We depict in \cref{fig:fullimpl} the schematic of a node with three neighbors.
From Theorem~\ref{lem:uncertainty}, we obtain
  $\kappa \approx \SI{10}{\pico\second}$ and $\delta \approx \SI{5}{\pico\second}$
  which matched the previously chosen latencies of the delay
  elements.
Applying Theorem~\ref{thm:gcs}, finally, yields the global and local skew
  bounds:
  $\calG(t) \leq 1.223 \kappa D = 12.23 D \,\si{\pico\second}$  and
  $\calL(t) \leq (\ceil{ \log_{10} (1.223 D)} + 1)\kappa$.
For our design with diameter $D=3$ this makes
  a maximum global skew of \SI{36.69}{\pico\second}
  and a maximum local skew of $2\kappa = \SI{20}{\pico\second}$.
With regard to diameter $D=6$, we obtain
  a maximum global skew of \SI{73,38}{\pico\second}
  and a maximum local skew of $2\kappa = \SI{20}{\pico\second}$.

As described above, \BFM\ satisfies constraints \eqref{eqn:c2}
  \,--\,\eqref{eqn:c4}, \eqref{eqn:m1}, \eqref{eqn:m2}, \eqref{eqn:l1}, and 
  \eqref{eqn:l2}.
Parameters $\e$, $\delta_0$, and $\rho$ are restricted by the 
  technology.
For our choices of $\ell$, $\mu$, and $\kappa$ \BFM\ satisfies the conditions
  of Theorem~\ref{lem:uncertainty}.
Thus, from Theorem~\ref{lem:implements} it follows that our implementation
  maintains the skew bounds of the \textsf{GCS} algorithm.
\begin{corollary}\label{col:correctness}
  Given \eqref{eqn:c1}, \BFM\ is an implementation of $\CGCS$.
  Hence, it maintains global skew $\calG(t)\leq\calO(\delta D)$ and local skew
    $\calL(t)\leq\calO(\delta\log_{\mu/\rho} D)$.
\end{corollary}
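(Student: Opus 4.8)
The plan is to chain the three principal results of the previous subsections. The hardware description in Section~\ref{sec:impl} has already verified, module by module, that \BFM{} meets the behavioral contracts \eqref{eqn:c2}--\eqref{eqn:c4}, \eqref{eqn:m1}, \eqref{eqn:m2}, \eqref{eqn:l1}, and \eqref{eqn:l2}; together with the hypothesis \eqref{eqn:c1} this supplies every premise of Theorem~\ref{lem:implements}, which hands us the explicit skew bounds of Theorem~\ref{thm:gcs}. The only remaining work is to (a) confirm that the parameter side-conditions of Theorem~\ref{lem:uncertainty} hold for the concrete $\rho,\mu,\kappa,\delta,\ell$ fixed by the implementation, and (b) rewrite the closed-form bounds of Theorem~\ref{thm:gcs} in the asymptotic form claimed.

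First I would assemble the module constraints. Constraint \eqref{eqn:c1} is the hypothesis of the corollary. The current-starved ring oscillator realizes \eqref{eqn:c2}--\eqref{eqn:c4}: by construction it runs at some rate between slow and fast for any input, so \eqref{eqn:c4} holds unconditionally, while a mode signal held stable for $\Tosc$ locks it to the slow or the fast rate, giving \eqref{eqn:c2} and \eqref{eqn:c3}. The improved TDC-based offset-measurement circuit of Figure~\ref{fig:bettermeas} realizes \eqref{eqn:m1} and \eqref{eqn:m2} with decision separator $\e$ absorbing the flip-flop setup/hold window, and the purely combinational \AND/\OR network of Figure~\ref{fig:controlm}, being metastability-containing, realizes \eqref{eqn:l1} and \eqref{eqn:l2}. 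Hence all constraints listed in Theorem~\ref{lem:implements} are in force.

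Next I would discharge the conditions of Theorem~\ref{lem:uncertainty}. Condition~(i), $\mu > 2\rho$, is the design choice $\mu \approx 10\rho$. Condition~(ii), the two-sided bound \eqref{eqn:delta-pm}, is exactly Lemma~\ref{lem:clkdelta} once $\delta$ is set to $\delta_0 + (\rho+\mu+\rho\mu)\Tclk$. Condition~(iii), $\kappa > 2\delta$, is met by the implementation's $\kappa \approx \SI{10}{\pico\second}$ against $\delta \approx \SI{5}{\pico\second}$. Condition~(iv), finiteness of $\ell$, follows because \eqref{eqn:c1} together with Theorem~\ref{lem:uncertainty} already forces a bounded local skew $\calL$, so only finitely many thresholds with $(2s+1)\kappa \leq \calL + 2\delta$ exist; the implementation fixes $\ell = 2$, which suffices for the targeted diameters. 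With (i)--(iv) verified, Theorem~\ref{lem:implements} yields that \BFM{} implements $\CGCS$ and maintains the bounds of Theorem~\ref{thm:gcs}; since \eqref{eqn:c1} gives $\calL(0) \leq \kappa$, these hold for all $t>0$ (the first case of that theorem), not merely eventually.

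Finally I would pass to asymptotics. Because $\mu>2\rho$ is fixed, the prefactor $\frac{\mu}{\mu-2\rho}$ is a constant, so $\calG(t) \leq \frac{\mu\kappa D}{\mu-2\rho} = \calO(\kappa D)$, and $\log_{\mu/\rho}\frac{\mu D}{\mu-2\rho} = \log_{\mu/\rho} D + \calO(1)$ turns the local bound into $\calO(\kappa\log_{\mu/\rho} D)$. The last step, and the one I expect to be the only real subtlety, is replacing $\kappa$ by $\delta$: one must argue $\kappa = \Theta(\delta)$. The lower direction $\kappa = \Omega(\delta)$ is immediate from $\kappa > 2\delta$, but the upper direction $\kappa = \calO(\delta)$ requires that $\kappa$ be chosen as small as the constraint permits (here $\kappa \approx 2\delta$), and one must check that this is self-consistent, since $\delta$ itself depends through Lemma~\ref{lem:clkdelta} on $\Tclk$, which in turn is bounded below by the measurement latency and hence by $\kappa$. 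Verifying that a fixed point of this dependency exists with $\kappa$ and $\delta$ of the same order---as the extracted timings $\kappa \approx \SI{10}{\pico\second}$, $\delta \approx \SI{5}{\pico\second}$ exhibit---closes the argument and gives $\calG(t) \leq \calO(\delta D)$ and $\calL(t) \leq \calO(\delta \log_{\mu/\rho} D)$.
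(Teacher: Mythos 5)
Your proposal is correct and follows essentially the same route as the paper, which justifies the corollary in the paragraph immediately preceding it by noting that \BFM{} satisfies constraints \eqref{eqn:c2}--\eqref{eqn:l2}, that the chosen parameters meet the conditions of Theorem~\ref{lem:uncertainty}, and then invoking Theorem~\ref{lem:implements} together with the bounds of Theorem~\ref{thm:gcs}. Your closing discussion of why $\kappa=\Theta(\delta)$ (and the self-consistency of the $\kappa$--$\Tclk$--$\delta$ dependency) is more explicit than the paper, which simply chooses $\kappa$ as small as $\kappa>2\delta$ permits and leaves the substitution of $\kappa$ by $\delta$ in the asymptotic bounds implicit.
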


\begin{remark}
\begin{inparaenum}[(i)]
    \item Note that this section gives a ``recipe'' for realizing a \BFM\ system for any network, e.g., how the hardware parameters affect the local and global skews. Also note that because a \BFM\ system implements a \CGCS\ algorithm, all skews are provably and deterministically guaranteed.
    
    \item Considerably larger systems, \eg, a grid with side length of $W = 32$ nodes and diameter $D=2W-2 = 62$, still are guaranteed to have a maximum local skew of $2\kappa = \SI{20}{\pico\second}$. If we choose $\mu=10^{-3}$, the base of the logarithm in the skew bound increases from $10$ to $100$.
\end{inparaenum}
\end{remark}

\section{Simulations}\label{sec:simu}

We ran \textsc{spice} simulations of the post-layout extracted design with Cadence Spectre.
Simulated systems had $4$ and $7$ nodes arranged in a line, as described in \cref{sec:impl}.
We chose a line setup since it allows us to compare local (between neighbors) versus
  global (typically the line ends) skew best.
Nodes are labeled $0$ to $3$ (respectively $6$).
For the simulations, we set $\mu = 10\rho$ (instead of $100\rho$), resulting in a slower
  decrease of skew, to observe better how the skew is removed.
Operational corners of the \textsc{spice} simulations were $\SI{0.8}{\volt}$ supply voltage and $\SI{27}{\degreeCelsius}$ temperature. 

Simulation with a small initial skew yields a peak power of $\SI{20.25}{\milli\watt}$ during stabilization and an average power of $\SI{5.26}{\milli\watt}$.
The performance measure of our system is given by the quality of the local skew.
We discuss in detail the local skew for different set-ups in the next section.

\subsection{\textsc{spice} Simulations on a $4$ Node Topology}\label{sec:sim4line}

\myparagraph{Scenarios}
We designed three simulation scenarios with different initial skews
  that demonstrate different properties of the algorithm:
\textsc{ahead}: node~$1$ is initialized with an offset of
  \SI{40}{\pico\second} ahead of all other nodes,
\textsc{behind}: node~$1$ is initialized with an offset of
  \SI{40}{\pico\second} behind all other nodes, and
\textsc{gradient}: nodes are initialized with small skews on each edge,
  that sum up to a large \SI{105}{\pico\second} global skew.
Simulation time for all scenarios is
  \SI{1000}{\nano\second} ($\approx$~$2000$ clock cycles).

\begin{figure*}
  \begin{subfigure}{0.245\linewidth}
    \centering
    \includegraphics[width=\linewidth,clip,trim=65 0 61 0]{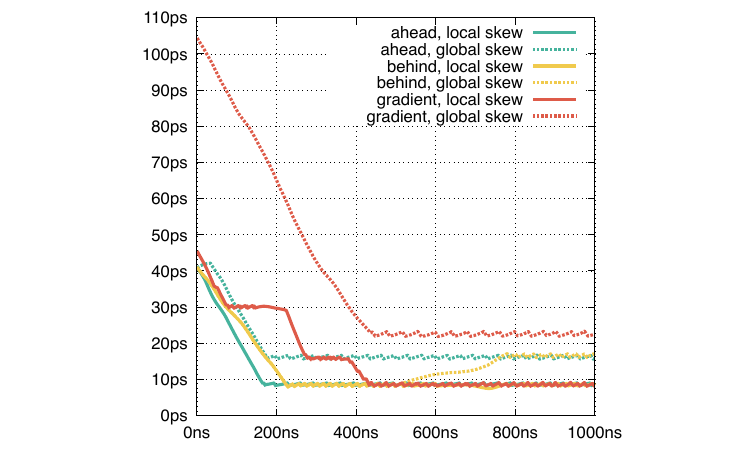}
    \caption{Maximum local skew and global skew for all scenarios}
    \label{fig:sim_skews_globloc}
  \end{subfigure}
  \begin{subfigure}{0.245\linewidth}
    \centering
    \includegraphics[width=\linewidth,clip,trim=65 0 61 0]{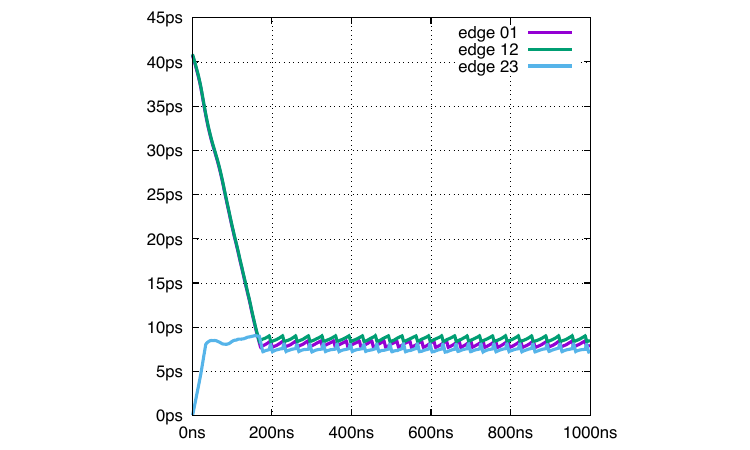}
    \caption{Scenario\\\textsc{ahead}}
    \label{fig:sim_skews_ahead}
  \end{subfigure}
  \begin{subfigure}{0.245\linewidth}
    \centering
    \includegraphics[width=\linewidth,clip,trim=65 0 61 0]{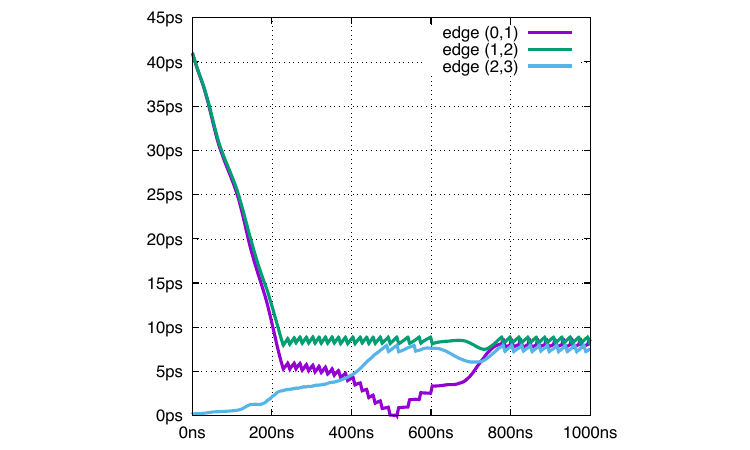}
    \caption{Scenario\\\textsc{behind}}
    \label{fig:sim_skews_behind}
  \end{subfigure}
  \begin{subfigure}{0.245\linewidth}
    \centering
    \includegraphics[width=\linewidth,clip,trim=65 0 61 0]{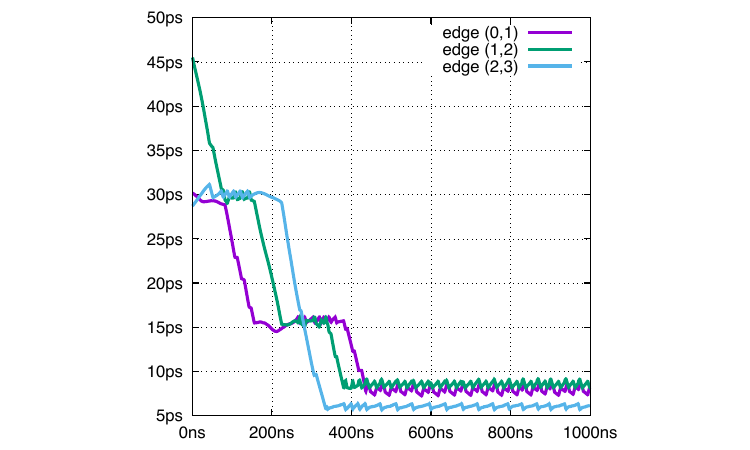}
    \caption{Scenario\\\textsc{gradient}}
    \label{fig:sim_skews_grad}
  \end{subfigure}
  \caption{Skews on edges $(0,1)$, $(1,2)$, and $(2,3)$ in the $4$ node topology, showing the skew (y-axis) over simulated time (x-axis) for SPICE simulations of scenarios \textsc{ahead}, \textsc{behind}, and \textsc{gradient}.}
  \label{fig:sim_skews}
\end{figure*}

\Cref{fig:sim_skews_globloc} depict the local and global
  skews of all scenarios.
Observe that all local skews decrease until they
  reach less than \SI{9}{\pico\second}.
The local skew then remains in a stable region.
This is well below our worst-case bound of \SI{20}{\pico\second} on the local skew.
We observe that the global skew slightly increases at the beginning of scenario \textsc{ahead}
  and after roughly \SI{500}{\nano\second} in scenario \textsc{behind}.

\myparagraph{One Node Ahead}
\Cref{fig:sim_clocks} shows the clock signals of nodes~$0$ to $3$ at three
  points in time for scenario \textsc{ahead}: (i) shortly after the initialization,
  (ii) around \SI{100}{\nano\second}, and (iii) after \SI{175}{\nano\second}.
The skews on the three nodes' edges are depicted in
  \cref{fig:sim_skews_ahead}.

\begin{figure}
  \centering
  \includegraphics[width=\linewidth]{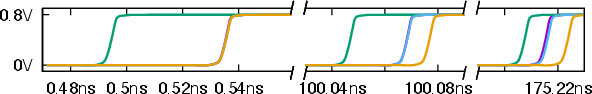}
  \caption{Excerpt of scenario \textsc{ahead}.
  Clock signals of node~$0$ (purple), $1$ (green), $2$ (blue), and $3$ (yellow). We show voltage of the $\clk$ signals on the x-axis over simulated time on the y-axis.
  Nodes from left to right: (i) $1$ before $0$, $2$, $3$, (ii) $1$ before $0$, $2$ before $3$,
    (iii) $1$ before $0$, $0$ slightly before $2$, $2$ before $3$.}
  \label{fig:sim_clocks}
\end{figure}

For the mode signals, in the first scenario, we observe the following:
Since node $1$ is ahead of nodes $0$ and $2$, node $1$'s mode signal is correctly set to
  $0$ (slow mode) while node $0$ and $2$'s mode signals are set to $1$ (fast mode).
Node $3$ is unaware that node $1$ is ahead since it only monitors node $2$.
By default, its mode signal is set to slow mode. Node $2$ then advances its clock
  faster than node $3$.
When the gap between $2$ and $3$ is large enough, node $3$ switches to fast mode.
This configuration remains until nodes $0$ and $2$ catch up to node $1$, where they
  switch to slow mode not to overtake node $1$.
Again, node $3$ sees only node $2$, which is still ahead, and switches to slow mode only after
  it catches up to $2$.

\myparagraph{One Node Behind}
The skews on the edges $(0,1)$, $(1,2)$, and $(2,3)$ are depicted in
  \cref{fig:sim_skews_behind}.
We plot the absolute value of the skew, \eg, at roughly
  \SI{500}{\nano\second} node $1$ overtakes node $0$.
The simulation shows that the algorithm immediately reduces the local skew.
After the system reaches a small local skew after \SI{200}{\nano\second},
  nodes drift relative to each other, \eg, node $2$ drifts ahead of
  node $3$, and node $1$ overtakes node $0$.
The local skew remains in the stable (oscillatory) state after
  \SI{200}{\nano\second} and does not increase significantly.

\myparagraph{Gradient Skew}
The scenario \textsc{gradient} demonstrates how the $\OGCS$ algorithm works.
It reduces the local skew in steps of (odd multiples of) $\kappa$, as
  seen in the plot in \cref{fig:sim_skews_grad} that looks like a staircase.
The algorithm reduces skew on one edge at a time until it reaches the
  next plateau.

%%% shorten
\begin{figure}
  \centering
  \includegraphics[width=0.6\linewidth]{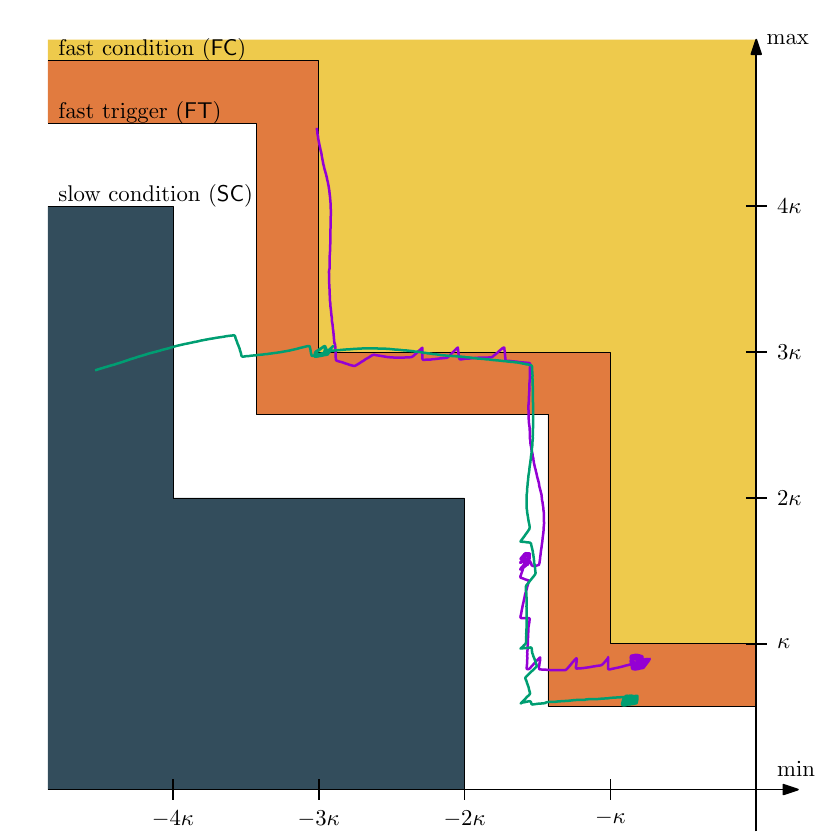}
  \caption{Simulated trajectory of $\widehat{O}_{\max}$ and $\widehat{O}_{\min}$ of node $1$ (purple) and node $2$ (green) from scenario \textsc{gradient} plotted in \cref{fig:gcs_large}}
  \label{fig:gcs_traj_simu}
\end{figure}
%%% - shorten

\Cref{fig:sim_skews_grad} demonstrates how skew is
  removed.
$\OGCS$ starts by reducing skew on edge $(1,2)$ until it reaches the
  plateau of $(0,1)$ and $(2,3)$.
One by one it then reduces skew on edges $(0,1)$, $(1,2)$ to $(2,3)$ until they reach the
  next plateau.
Finally, it reduces the skews one by one (in reverse order)
  down to a stable range.
\Cref{fig:gcs_traj_simu} shows the same trace in a plot similar to \cref{fig:gcs_large}. %%% shorten
  
\begin{remark}
  For the sake of a clearly visible convergence of the skew over time,
    we chose initial skews that are beyond the bound in \eqref{eqn:c1}.  
  Thus, the weaker self-stabilizing bound from Theorem~\ref{thm:gcs} applies.
  All simulations showed convergence to a small local skew despite the less
    conservative initialization -- demonstrating the robustness of our algorithm
    also to larger initial skews.
  For our stronger bounds to hold, the reader may consider only
    a respective postfix of the simulation.
\end{remark}

\subsection{Process Variations}
In order to show resilience of our implementation towards process variations,
  we introduce variations to the extracted netlist and run further \textsc{spice} 
  simulations.
We initialize node 1 with a small offset to nodes 0, 2, and 3.
Variations affect the width and length of n-channel and p-channel transistors and the supply
  voltage, where all parameters are simulated at 90\%, 100\%, and 110\% of 
  their typical value.
The resulting skew on edge (0,1) for each simulation is depicted in \cref{fig:variations}.
The simulations show that our system performs well even under process variations.
The local skew after stabilization is below the
  theoretical bound of \SI{20}{\nano\second}.

\begin{figure}
  \centering
  \includegraphics[width=\linewidth]{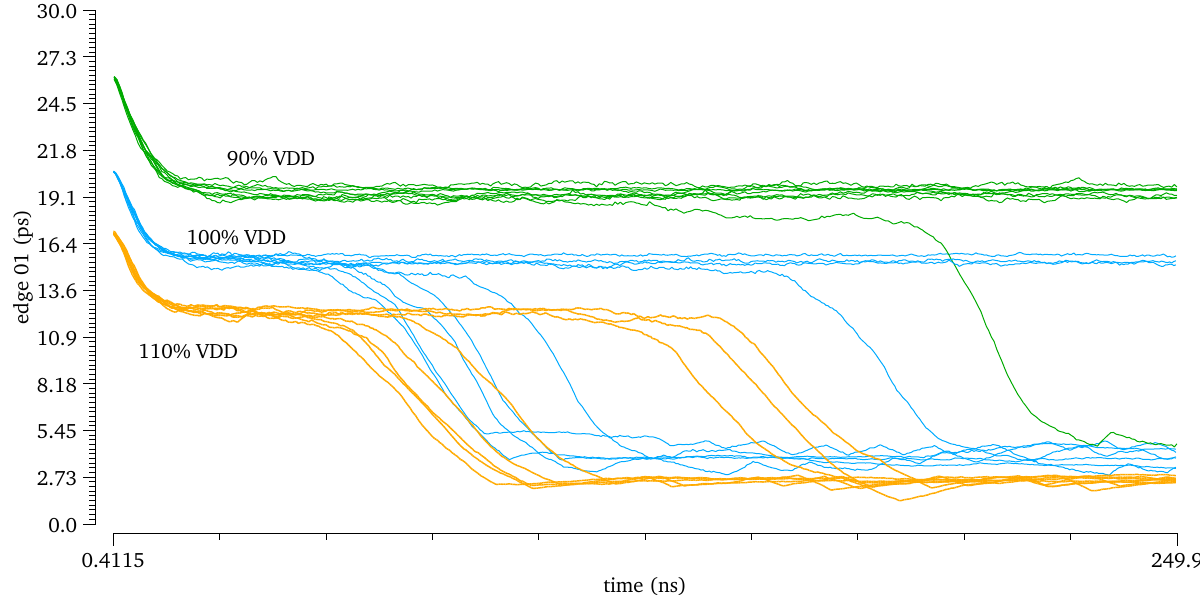}
  \caption{Simulation results for skew (y-axis) over simulated time (x-axis) for edge $(0,1)$ under $90\%$, $100\%$, and $110\%$ variation of the supply voltage and transistor sizes. }
  \label{fig:variations}
\end{figure}

\subsection{Comparison to a Clock Tree}
\label{subsec:clocktree}
For comparison, we laid out a grid of $W \times W$ flip-flops,
  evenly spread in \SI{200}{\micro\metre} distance in x and y direction across the
  chip.
The data port of a flip-flop is driven by the $\OR$ of the up to four
  adjacent flip-flops.
Clock trees were synthesized and routed with Cadence Innovus, with the target
  to minimize skews.
Parameters for delay variations on gates and nets were set to $\pm 5\%$.

For a $2\times2$ grid, Innovus reported an area of $\SI{1.48}{\micro\metre^2}$
  and power of $\SI{0.231}{\milli\watt}$ for the clock tree.
Reported numbers for the PALS system, which comprises $4$ nodes in a line, are
  $\SI{67.09}{\micro\metre^2}$ area and $\SI{0.023}{\milli\watt}$
  power.
Numbers for the PALS system do not include the $4$ starved inverter ring oscillators.
We point out that the size of $\SI{67.09}{\micro\metre^2}$ covers only $0.04\%$
  of the floorplan.
The PALS system uses $171$ gates from the standard cell library.

The resulting clock skews are presented in \cref{fig:clk1}.
We plotted skews guaranteed by our algorithm
  for the same grids with parameters extracted from the
  implementation described in \cref{sec:impl}.
Observe the linear growth of the local clock skew measured in the simulation compared to the
  logarithmic growth of the analytical upper bound on the local skew in our implementation.
The figure also shows the simulated skew for a clock tree with delay variations of $\pm 10\%$.
This comparison is relevant, as $\delta_0$ is governed by \emph{local} delay variations,
  which can be expected to be smaller than those across a large chip.

\begin{figure}
\centering
  \includegraphics[width=.8\linewidth]{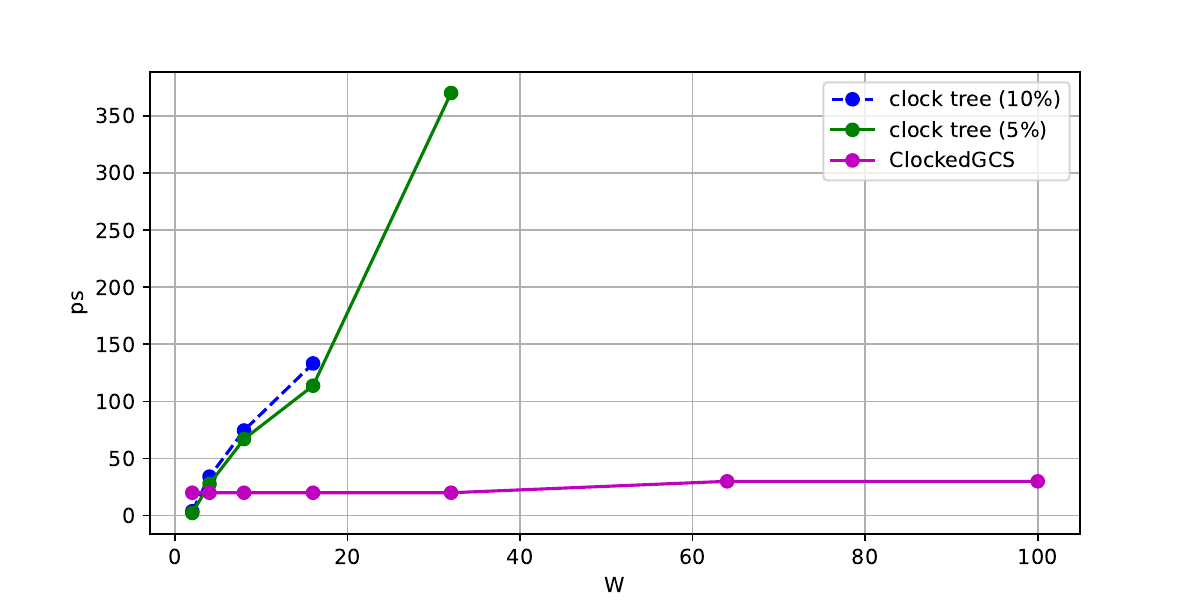}
  \caption{
  Local skew (ps) between neighboring flip-flops in the $W \times W$ grid, with unit length \SI{200}{\micro\metre}.
  Clock tree with $\pm 5\%$ delay variation (solid green) and our algorithm
  with $\pm 5\%$ delay variation (solid magenta).
  The dotted line shows the clock tree with $\pm 10\%$ delay variation,
    demonstrating linear growth of the skew also in a different setting.
  Clock trees are shown up to $W=32$ (i.e., floorplan of width and length $\SI{6.4}{\milli\metre}$) after which Innovus ran out of memory.}
  \label{fig:clk1}
\end{figure}

While a priori the observed linear local skew may be due to the tool, it has been shown
  that no tool can obtain a local skew less than proportional
  to $W$~\cite{fisher85synchronizing}.
This follows from the fact that there are always two
  neighboring nodes in the grid which are in distance proportional to $W$ from
  each other in the clock
  tree~\cite{fisher85synchronizing,boksberger2003approximation}.
Accordingly, uncertainties accumulate in the worst-case fashion to create a local
  skew which is proportional to $W$.
Our algorithm, on the other hand, manages to reduce the local skew exponentially
to being proportional to $\log W$.

To gain intuition on this result, note that there is always an edge that, if
removed (see the edge which is marked by an X in \cref{fig:clktree}),
partitions the tree into two subtrees each spanning an area of $\Omega(W^2)$ and
hence having a shared perimeter of length $\Omega(W)$. Thus, there must be two
adjacent nodes, one on each side of the perimeter, at distance $\Omega(W)$ in
the tree.

\begin{figure}
  \centering
  \includegraphics[width=0.8\linewidth]{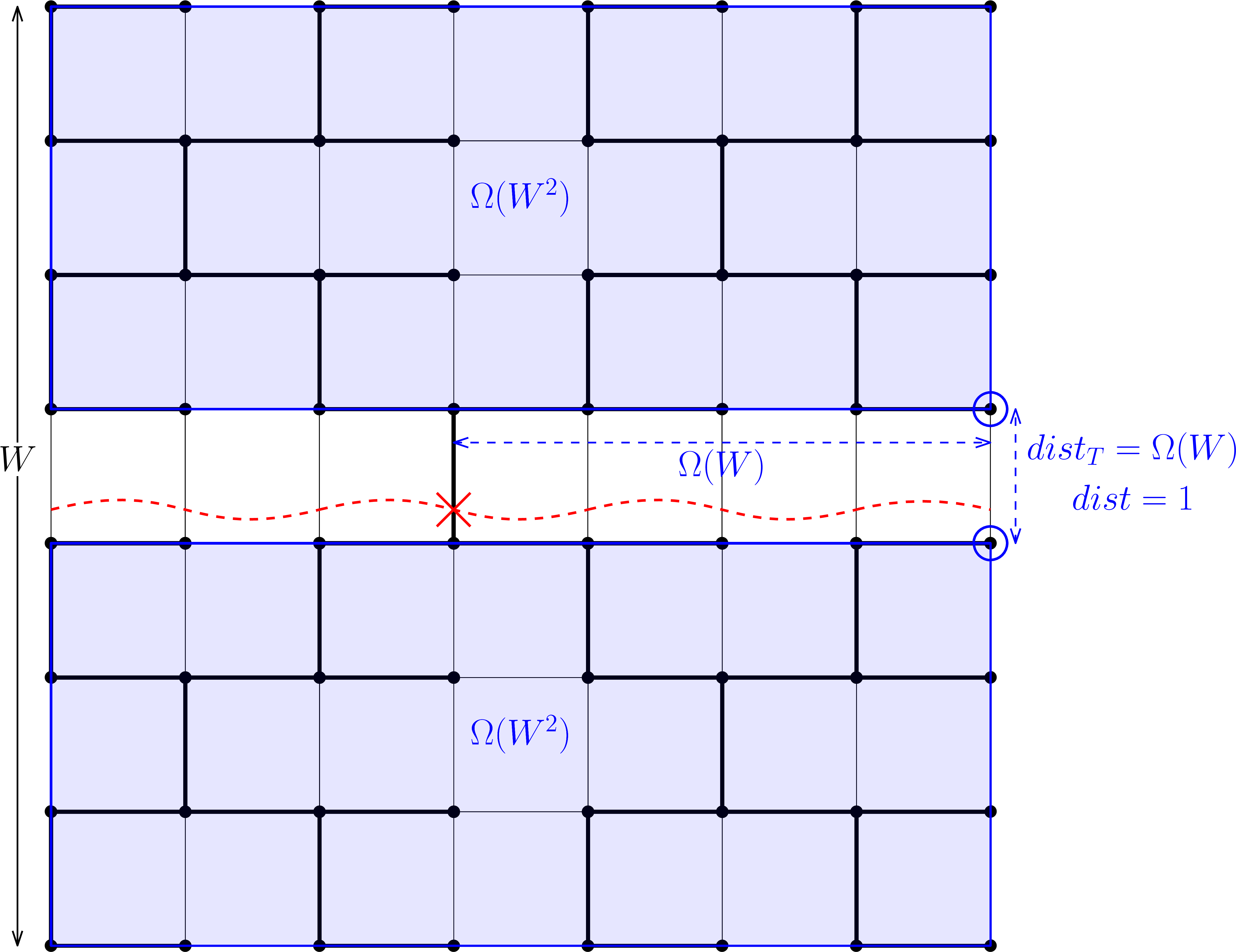}
  \caption{A low stretch spanning tree of an $W\times W$ ($W=8$) grid~\cite{blog}.
    The bold lines depict the spanning tree, \ie, our clock tree in this example.
    The two neighboring nodes that are of distance $13$ in the tree are circled
    (at the middle right side of the grid).}
  \label{fig:clktree}
\end{figure}

\subsection{Comparison to Distributed Clock Generation}
\label{sec:fairbanks}

We next compare our findings to distributed clock generation
  schemes.
Natural are \emph{wait-for-all}
  and \emph{wait-for-one}, where a node produces its next clock tick once it receives
  a tick by one or all its neighbors.
Both approaches are vulnerable to large local skews, however.

A clever combination of both schemes is used by the clock generation
  grid by Fairbanks and Moore~\cite{fairbanks2005self}.
In the grid, both approaches alternate for adjacent nodes.
For comparison, we simulated a digital abstraction of the clock generation
  grid.
Based on ideas of the lower bound proof for local skews~\cite{fan2006gradient},
  we construct a simulation scenario that demonstrates that large local skew
  are possible in the clock grid, however.

\myparagraph{Clock Generation Grid}
The clock generation grid is a self-timed analog circuit that provides
  local, synchronized clocks.
It is based on the \emph{Dynamic asP} \textsc{fifo}
  control by Molnar and Fairbanks~\cite{molnar1999control}.
While the optimized version of the clock generation grid is an analog  implementation that involves rigorous transistor sizing and layout,
  we focus on a digital version~\cite{fairbanks2005self}
  which is easier to adapt and manufacture in a standard design process.
We distinguish two types of nodes: pull-up nodes and pull-down nodes
  (see \cref{fig:fairbanks-implementation}).
On every edge between two nodes, there is a set-reset latch.
Pull-up nodes set the latch and pull-down nodes reset the latch.
Pull-up nodes compute the logical $\NOR$ of incoming edges
  (equivalent to a wait-for-all approach) and set the latch.
Pull-down nodes compute the logical $\AND$ of incoming edges
  (equivalent to a wait-for-one approach) and reset the latch.
The clock of a node is derived by the output of the respective $\NOR$
  or $\AND$.
The grid's frequency is easily adjusted by adding a delay
  between the nodes and the latches.

\begin{figure}
  \centering
  \includegraphics[width=0.7\linewidth]{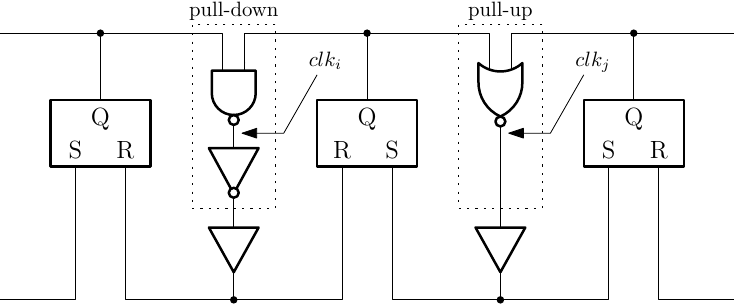}
  \caption{Schematic of the digital abstraction of the Fairbanks clock generation on a line showing one pull-down and one pull-up node.}
  \label{fig:fairbanks-implementation}
\end{figure}

\myparagraph{Setup}
We next conducted simulations that examine the behavior of different
  communication delays.
In order to simulate slower communication paths, we add a small
  capacity (\SI{0.01}{\pico\farad}) to the communication channel.
For simplicity, we differentiate between two delays: fast and slow.

Formally, the algorithm combining wait-for-all and wait-for-one approaches
  can have a local skew that grows linearly with the network diameter.
Through our simulations, we demonstrate that this is indeed possible in the clock generation
  grid and that $\OGCS$ can cope with this situation.
The setup of the simulated delays is depicted in \cref{fig:fairbanks-setup}:
  outgoing edges of nodes $2$, $3$, and $4$ are fast and edges outgoing from $0$, $1$, $5$, and $6$
  are slow.

\begin{figure}
  \includegraphics[width=\linewidth]{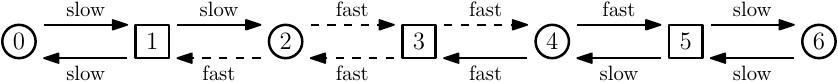}
  \caption{Delay setup with fast and slow message delays for the Fairbanks clock generation on a line. The setup achieves a large local skew. Round nodes denote pull-down nodes
  and rectangular nodes denote pull-up nodes. In the lower bound simulation dashed edges are
  swapped from fast to slow communication.}
  \label{fig:fairbanks-setup}
\end{figure}

\myparagraph{Results for Clock Generation Grid}
The digital clock generation grid runs at a frequency of
  about \SI{2.5}{\giga\hertz}.
By simulations, we determined that an additional capacity of \SI{0.01}{\pico\farad} on an edge
  adds a delay of approximately \SI{7}{\pico\second}.
We next conducted simulations with three different delay settings:
\textsc{nocap}: the grid without additional delays,
\textsc{fullcap}: the grid with the capacity added to every edge,
and \textsc{large-local}: the grid with the setting from \cref{fig:fairbanks-setup}.

Local skews of simulation scenarios \textsc{nocap}, \textsc{fullcap},
  and \textsc{large-local} are shown in \cref{fig:localskews}.
We observe that the grid achieves a low skew if the delay is uniform
  on all edges.
For scenario \textsc{nocap} (\resp \textsc{fullcap}) we measure a local
  skew of \SI{15}{\pico\second} (\resp \SI{16}{\pico\second})
  and global skew of \SI{22}{\pico\second} (\resp \SI{23}{\pico\second}).
By contrast, the grid experiences poor synchronization
  for non-uniform delays (\textsc{large-local}) where we obtained
  a local skew of \SI{38}{\pico\second} and a global skew of \SI{61}{\pico\second}.

%%% shorten
\begin{figure}
  \begin{subfigure}{.5\linewidth}
    \centering
    \includegraphics[width=\linewidth,clip,trim=65 0 61 0]{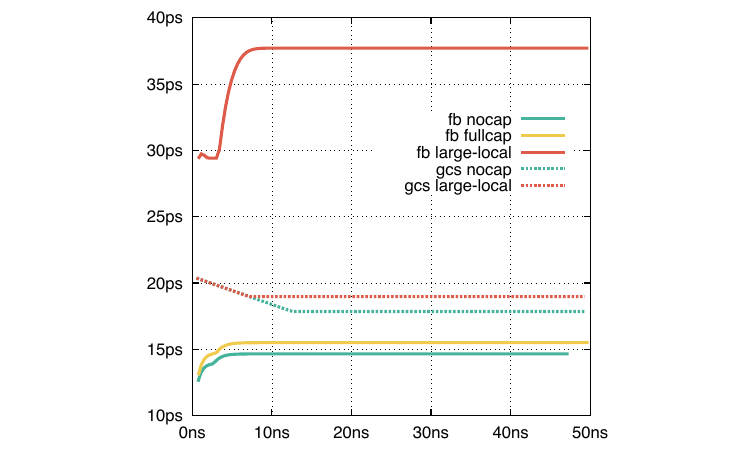}
    \caption{Fairbanks (fb) and GCS}
    \label{fig:localskews}
  \end{subfigure}
  \begin{subfigure}{.5\linewidth}
    \centering
    \includegraphics[width=\linewidth,clip,trim=65 0 61 0]{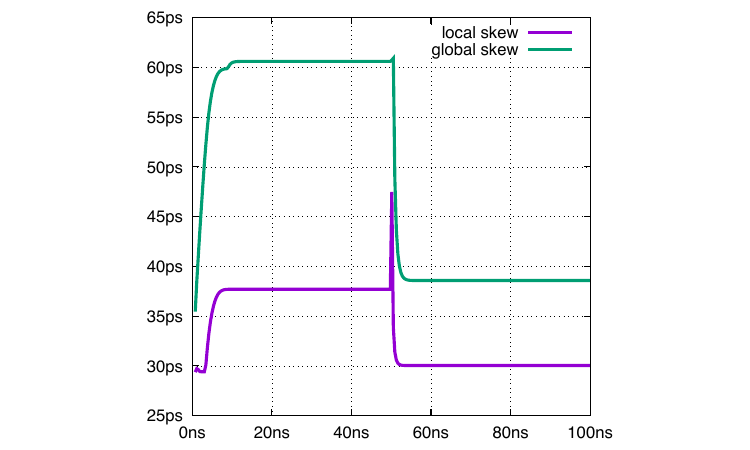}
    \caption{Fairbanks Line}
    \label{fig:fbswap}
  \end{subfigure}
  \caption{Comparison of Fairbanks to $\CGCS$. (a) Local skews of the experiments \textsc{nocap}, \textsc{fullcap}, and \textsc{large-local}. (b) Local and global skew of the lower bound simulation.}
\end{figure}

\myparagraph{Lower Bound Simulation}
In this simulation, we apply ideas from the formal argument for
  lower bounds on wait-for-all and wait-for-one approaches.
The idea is to build up a large global skew and then 
  change the edges' delays step by step, pushing the global skew onto a single edge.
The simulation in \cref{fig:fbswap} shows one of these push-steps.
In the first part of the simulation (until \SI{50}{\nano\second})
  the system builds up a large local skew.
At \SI{50}{\nano\second} we switch delays of edges outgoing from nodes $3$ and $4$
  as described in \cref{fig:fairbanks-setup}.
As expected, the global skew is pushed onto the local skew right after
 (we measure \SI{47}{\pico\second}).
Following the argument of the lower-bound proof, one can repeat the procedure to
  push the complete global skew (temporarily) onto a single edge.
%%% - shorten

\myparagraph{Comparison to $\BFM$}
Our clock generation algorithm runs at a frequency of
  about \SI{2}{\giga\hertz}.
By simulations, we measured that the added capacity of
   \SI{0.01}{\pico\farad} adds a delay of approximately \SI{1}{\pico\second}.
We observed a local skew of \SI{18}{\pico\second} and a global skew of \SI{20}{\pico\second}
  in the absence of additional communication delay.
This is slightly worse than the local skew of the grid (see \cref{fig:localskews}).
However, by our theoretical findings, our algorithm does not suffer from a large skew in the setting
  where delays in the center are fast or switched from fast to slow:
  simulations of \cref{fig:fairbanks-setup} showed a local skew of
  \SI{19}{\pico\second} and a global skew of \SI{20}{\pico\second}.

\section{Conclusion}\label{sec:concl}
The presented $\CGCS$ algorithm is a clock synchronization algorithm that provably
  maintains the skew bounds of the \textsf{GCS} algorithm by Lenzen et al.~\cite{lenzen10tight}. 
The algorithm can be deconstructed into parts that hardware modules can implement.
Asymptotically, the algorithm maintains a local skew that is at most logarithmic in the chip's width, whereas clock trees are shown to perform only linear in the width of the chip.

By simulation, we show that our $\BFM$ implementation of these modules in a $\SI{15}{\nano\metre}$ FinFET
  achieves small skews between neighbors even under process variations.
We discuss different simulation setups that show how the algorithm behaves. 

In future work, we aim to improve the analysis of the \textsf{GCS} algorithm by showing its correctness for oscillators with less stringent requirements. Additionally, we plan to conduct simulations of a full design, incorporating PVT and PPA experiments, and eventually produce an ASIC chip for comparison to a cutting-edge GALS design.

\ifCLASSOPTIONcompsoc
  % The Computer Society usually uses the plural form
  \section*{Acknowledgments}
\else
  % regular IEEE prefers the singular form
  \section*{Acknowledgment}
\fi
The authors would like to thank the team of EnICS Labs 
for support and helpful discussions.
In particular, we thank Benjamin Zambrano and Itamar Levi, Itay Merlin, Shawn Ruby, Adam Teman, Leonid Yavits. 
This project has received funding from the
European Research Council (ERC) under the European
Union’s Horizon 2020 research and innovation programme
(grant agreement 716562). 
This research
was supported by the Israel Science Foundation under Grant 867/19 
and by the ANR project DREAMY (ANR-21-CE48-0003).
\ifCLASSOPTIONcaptionsoff
  \newpage
\fi
\bibliographystyle{IEEEtran}
\bibliography{biblio.bib}
\begin{IEEEbiography}
[{\includegraphics[width=1in,height=1.25in,clip,keepaspectratio,trim=5 0 4 0]{./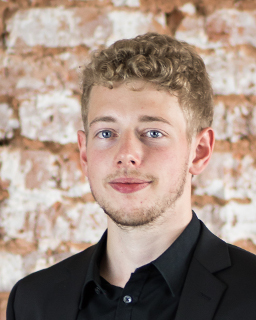}}]{Johannes~Bund}
is a post-doc researcher at the Faculty of Engineering at Bar-Ilan University since 2022.
He graduated with his M.\,Sc.\ studies in 2018 at the Saarland Informatics Campus and
  Max-Planck Institute for Informatics.
In 2018 he joined Christoph Lenzen's group at Max-Planck Institute for Informatics as 
  a Ph.\,D.\ student.
In 2021 he switched, together with Christoph Lenzen, to CISPA Helmholtz Center for Information
  Security, where he finished his Ph.\,D.\ studies.
\end{IEEEbiography}
\begin{IEEEbiography}
[{\includegraphics[width=1in,height=1.25in,clip,keepaspectratio]{./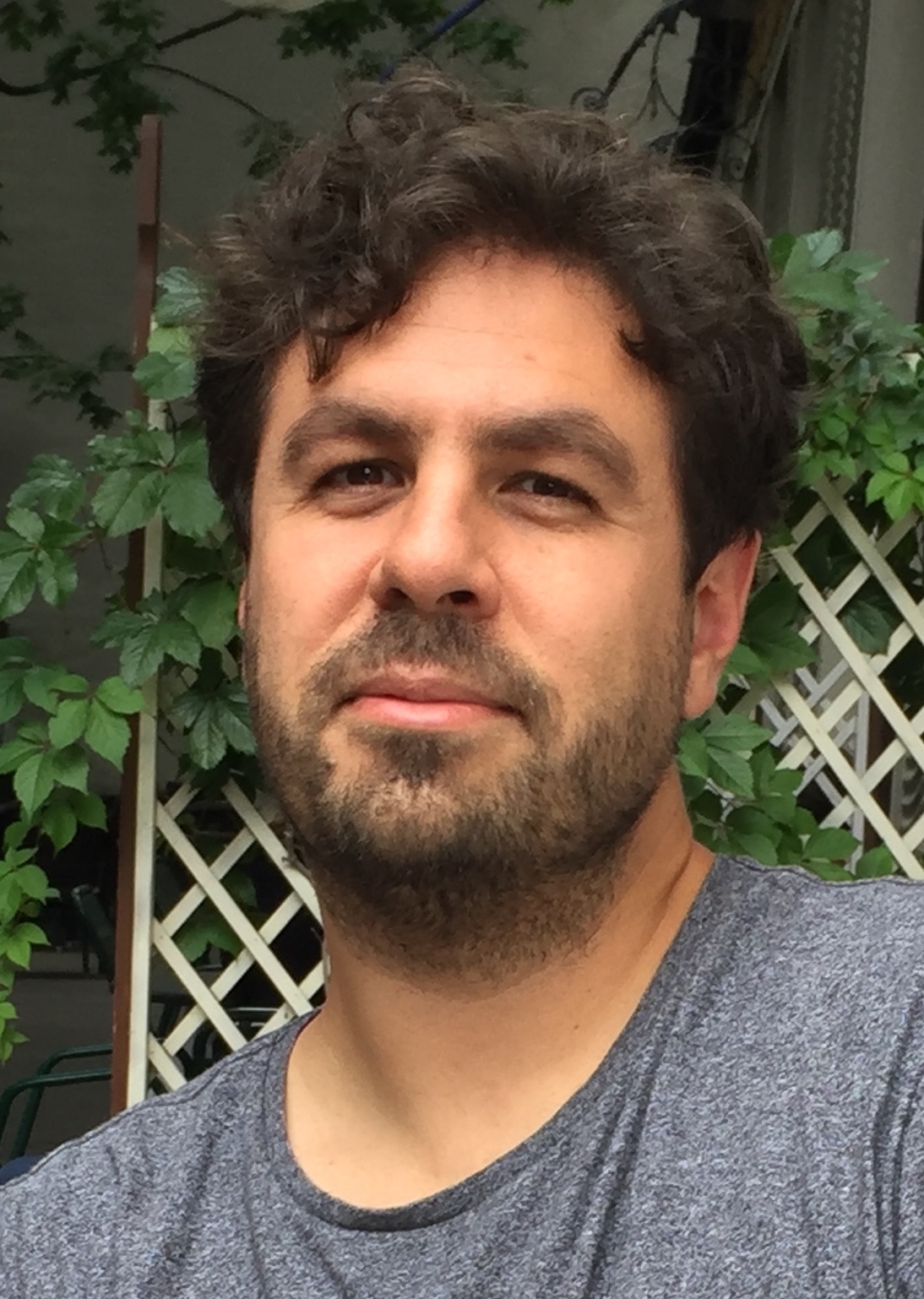}}]{Matthias~F\"ugger}
received his M.\,Sc.\ (2006), and his Ph.\,D. (2010) in computer engineering from TU Wien, Austria.
He worked as an assistant professor at TU Wien and as a post-doctoral researcher at LIX, Ecole Polytechnique, and at MPI for Informatics.
Currently, he is a CNRS researcher at LMF, ENS Paris-Saclay, where he leads the
  Distributed Computing group.
\end{IEEEbiography}
\begin{IEEEbiography}[{\includegraphics[width=1in,height=1.25in,clip,keepaspectratio]{./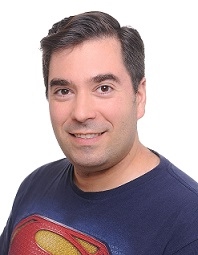}}]{Moti Medina}
is a faculty member in the engineering faculty at Bar-Ilan University since 2021.
Previously he was a faculty member at the School of Electrical \& Computer Engineering at
the Ben-Gurion University of the Negev since 2017. Previously, he was a post-doc
researcher in MPI for Informatics and in the Algorithms and Complexity group at
LIAFA (Paris 7). He graduated with his Ph.\,D., M.\,Sc., and B.\,Sc.\ studies at the
School of Electrical Engineering at Tel-Aviv University, in  2014, 2009, and 2007
respectively. Moti is also a co-author of a  text-book on logic design
``Digital Logic Design: A Rigorous Approach'', Cambridge Univ. Press, 2012.
\end{IEEEbiography}
\vfill
\end{document}